\newtheorem{theorem}{Theorem}
\newtheorem{lemma}{Lemma}
\newtheorem{definition}{Definition}
\newtheorem{remark}{Remark}
\newtheorem{proposition}{Proposition}
\newtheorem{assumption}{Assumption}
\newtheorem{corollary}{Corollary}
\newtheorem{example}{Example}
\newtheorem{problem}{Problem}
\renewcommand\arraystretch{0.75}
\def\BibTeX{{\rm B\kern-.05em{\sc i\kern-.025em b}\kern-.08em
    T\kern-.1667em\lower.7ex\hbox{E}\kern-.125emX}}
\begin{document}
	
%\title{Resilient Leader-Follower Consensus in Time-Varying Networks with Multi-Hop Relays}

%\title{Leader-Follower Consensus in Time-Varying Networks: A Resilient Multi-Hop Relay Approach}

\title{Reaching Resilient Leader-Follower Consensus in Time-Varying Networks via Multi-Hop Relays}

\author{Liwei Yuan, \IEEEmembership{Member, IEEE}, and Hideaki Ishii, \IEEEmembership{Fellow, IEEE}
	\thanks{This work was supported in part by JSPS under Grant-in-Aid for 
		Scientific Research Grant No.~22H01508.}
	\thanks{L. Yuan is with the College of Electrical and Information Engineering, Hunan University, Changsha, 410082, China (e-mail: yuanliwei@hnu.edu.cn). }
	\thanks{H. Ishii is with the Department of Information Physics and Computing, The University of Tokyo, Tokyo, 113-8656, Japan (e-mail: hideaki\_ishii@ipc.i.u-tokyo.ac.jp). }
}

\maketitle

\begin{abstract}
We study the problem of resilient leader-follower consensus of multi-agent systems (MASs) in the presence of adversarial agents, where agents' communication is modeled by time-varying topologies. The objective is to develop distributed algorithms for the nonfaulty/normal follower agents to track an arbitrary reference value propagated by a set of leaders while they are in interaction with the unknown adversarial agents.
Our approaches are based on the weighted mean subsequence reduced (W-MSR) algorithms with agents being capable to communicate with multi-hop neighbors.
The proposed algorithms solve our resilient leader-follower consensus problem with agents possessing first-order and second-order dynamics.
Moreover, we characterize tight necessary and sufficient graph conditions for the proposed algorithms to succeed in terms of the novel notion of jointly robust following graphs. Our graph condition is tighter than the sufficient graph conditions in the literature when agents use only one-hop communication (without relays). Using multi-hop relays, we are able to enhance robustness of leader-follower networks without increasing physical communication links and obtain further relaxed graph requirements for our algorithms to succeed. Numerical examples are given to verify the efficacy of the proposed algorithms.
\end{abstract}

\begin{IEEEkeywords}
Cyber security, leader-follower network, resilient algorithms, time-varying topology.
\end{IEEEkeywords}

\section{Introduction}

\IEEEPARstart{O}{ver}
the past few decades, distributed consensus has emerged as a cornerstone of research in the fields of multi-agent systems (MASs) and distributed algorithms \cite{bullo2009distributed,Lynch,ren2005consensus}. In such a problem, agents connected over a network try to reach consensus on a common value while interacting with only neighboring agents. Stemming from this concept, extensive applications and algorithms have been devised to overcome various industrial challenges \cite{iori2024resilient,yang2013consensus,mitra2021,nedic2009distributed}.
Concurrently, growing concerns over cyber security within MASs have amplified the significance of consensus protocols, especially in scenarios where adversarial agents induce failures or launch attacks, e.g.,
\cite{sundaram2018distributed,ishii2022overview,sundaram2011distributed,yuan2021secure}.
Under this topic, the problems of resilient consensus have drawn much attention in areas of systems control, distributed computing, and cooperative robotics \cite{dibaji2017resilient, leblanc2013resilient, vaidya2012iterative,su2017reaching,yu2022resilient}, where the nonfaulty, normal agents aim to reach consensus despite the possible misbehaviors of adversarial agents. A common goal in this setting is that normal agents arrive at the same value located within the convex hull of their initial states. However, for applications such as formation control and reliable broadcast, it is desirable that agents together track a specific value which is externally given and may be outside such a convex hull.
Thus, it motivates us to extend resilient consensus algorithms for such objectives.

A related problem in prior literature is the leader-follower consensus problem, where the goal is for normally behaving agents to come to an agreement on the reference value of a leader or a set of leaders \cite{ren2008consensus,dimarogonas2009leader}. However, these works considered MASs without any adversarial agents, potentially rendering them vulnerable to random failures or deliberate attacks. Within the domain of distributed computing, considerable efforts have been dedicated to ensuring reliable broadcast \cite{koo2006reliable} as well as the certified propagation algorithm (CPA) \cite{koo2004broadcast,tseng2015broadcast}. In these works, the objective is for a secure leader to broadcast a reference value to all nodes in the network in the presence of adversarial agents. Additionally, there is a body of research addressing the problem known as resilient distributed estimation. For instance, the work \cite{leBlanc2014resilient} studied resilient parameter estimation, where certain reliable agents drive the errors of the remaining normal agents to the static reference value of zero. Moreover, the authors of \cite{mitra2019byzantine} investigated a problem where the observation information of the system is resiliently transmitted from a group of source nodes to other nodes that cannot directly observe the system.

In this paper, we develop distributed algorithms to tackle resilient leader-follower consensus in time-varying networks.
In the literature, many efforts have been devoted to resilient consensus using the so-called mean subsequence reduced (MSR) algorithms \cite{vaidya2012iterative,leblanc2013resilient,usevitch2020determining,senejohnny2019resilience,yuan2023event}.
In such algorithms, each normal agent disregards the most deviated states of neighbors to avoid being affected by possible faulty values from adversarial neighbors.
Tight graph conditions on static (i.e., time-invariant) network structures guaranteeing the success of MSR algorithms have been derived for the class of malicious agents \cite{leblanc2013resilient,abbas2017improving,dibaji2017resilient} as well as the class of Byzantine agents \cite{vaidya2012iterative,yuan2022asynchronous}. Notably, \cite{leblanc2013resilient} demonstrated that static networks utilizing MSR algorithms must adhere to a specific structural criterion, called graph robustness, to attain resilient consensus. However, the majority of these studies have been confined to static networks, i.e., communication topologies are fixed across iterations.
However, in numerous applications of MASs with physical motions, e.g., formation control of drones and vehicle platoons, the underlying communication network may be time-varying due to limited communication ranges and temporal variations of communication channels \cite{bullo2009distributed,wen2023joint,yuan2024timevarying}.
Thus, there is a significant demand for investigating resilient leader-follower consensus in time-varying networks. For instance, the work \cite{usevitch2020resilient} proved a sufficient condition for the sliding weighted-MSR (SW-MSR) algorithm from \cite{saldana2017resilient} to achieve resilient leader-follower consensus to arbitrary static reference values. It reduced the stringent connectivity requirements of MSR algorithms at each iteration. Later, \cite{rezaee2021resiliency} studied resilient leader-follower consensus in static networks with the leader in each network having a dynamic reference value.

Meanwhile, several works relaxed the graph connectivity requirements for resilient consensus in static networks through multi-hop communication \cite{su2017reaching, sakavalas2020asynchronous,yuan2022asynchronous}, which enables messages sent by an agent to reach beyond its direct neighbors through relays by middle agents \cite{Lynch,goldsmith2005wireless}. It can improve network resilience against adversaries without changing the original topology as shown in \cite{su2017reaching, sakavalas2020asynchronous,yuan2021resilient}.
Motivated by these works, we are interested to investigate whether multi-hop relays could further help us to acquire a more relaxed condition for leader-follower consensus in time-varying networks.

We summarize the contributions of this paper as follows. 
First, we characterize a necessary and sufficient graph condition for the Multi-hop Weighted-MSR (MW-MSR) algorithm to achieve resilient leader-follower consensus in time-varying networks. Consequently, the normal follower agents are able to track the reference value propagated by a set of leaders in the presence of Byzantine agents, which may also include adversarial leaders.
Our graph condition is denoted by a novel notion of jointly robust following graphs with multi-hop communication. Compared to the SW-MSR algorithm \cite{saldana2017resilient, usevitch2020resilient} storing neighbors' values for the last certain time steps at each iteration, our approach utilizes neighbors' values of only the current time step at each iteration.
It is notable that even with one-hop communication, our graph condition is tighter than the ones in the resilient leader-follower consensus works with static reference values \cite{usevitch2020resilient} as well as dynamic reference values \cite{rezaee2021resiliency}.
Moreover, by increasing the number of relaying hops, our method can increase the graph robustness against adversaries without changing the network topology. Hence, our approach can tolerate more adversarial nodes compared to the one-hop MSR algorithms \cite{leblanc2013resilient,rezaee2021resiliency,ishii2022overview,wen2023joint} as well as the CPA-based methods \cite{koo2004broadcast,tseng2015broadcast}.
Moreover, numerical examples show that our method can achieve resilient leader-follower consensus in sparse time-varying networks where the algorithms in \cite{usevitch2020resilient, rezaee2021resiliency} have difficulties. As a side result, we present that the tight graph condition for resilient leader-follower consensus under the malicious model is the same as the one for the Byzantine model, even though malicious agents are less adversarial.

Second, we also deal with resilient leader-follower consensus in time-varying networks for agents with second-order dynamics and propose a multi-hop double-integrator position-based MSR (MDP-MSR) algorithm. 
This extension is vital since double-integrator dynamics are often used to characterize more accurate motions of agents in robotics; see, e.g., \cite{chipade2021multiagent}.
To the best of our knowledge, such a problem has not been investigated in the literature.
Furthermore, we derive a necessary and sufficient graph condition for the MDP-MSR algorithm to handle this case. The condition is the same as the one for the MW-MSR algorithm. Moreover, we provide necessary properties for verifying whether network topologies meet our conditions or not. 
Both theoretical results and numerical examples verify that the proposed algorithm with multi-hop relays can improve the robustness against adversaries in static as well as time-varying networks for agents with second-order dynamics.
%Compared to the dynamic leader-follower consensus works for the fault-free case \cite{ren2005consensus} and the resilient case \cite{rezaee2021resiliency}, our algorithm 
Lastly, we apply the algorithm for achieving formation control in the leader-follower configuration in the presence of adversaries, which could serve as a basis for applications of, e.g., multi-robot manufacturing in complex industrial sectors.

The rest of this paper is organized as follows. 
In Section~II, we outline the problem settings. In Section~III, we define the novel notion of joint robust following graphs with multi-hop communication.
In Section~IV, we derive a tight graph condition under which the MW-MSR algorithm guarantees resilient leader-follower consensus.
In Section~V, we introduce the MDP-MSR algorithm for MASs with second-order dynamics and provide tight graph conditions for the algorithm to achieve resilient leader-follower consensus in static and time-varying networks.
In Section~VI, we present numerical examples to verify the efficacy of our algorithms in sparse time-varying networks.
Finally, we conclude the paper in Section~VII. 
Compared to the preliminary version of this work \cite{yuan2024resilient}, the current paper contains additional results for time-varying networks, the results for the secure leader, the results for second-order MASs, and extensive numerical examples.

\section{Preliminaries and Problem Settings}

\subsection{Graph Notions}
Consider a time-varying directed graph $\mathcal{G}[k] = (\mathcal{V},\mathcal{E}[k])$ consisting of the node set $\mathcal{V}=\{1,...,n\}$ and the time-varying edge set $\mathcal{E}[k]$. The edge $(j,i)\in \mathcal{E}[k]$ indicates that node $i$ can get information from node $j$ at time $k\in \mathbb{Z}_{\geq 0} $.
The union of the graphs $\mathcal{G}[k] = (\mathcal{V},\mathcal{E}[k])$ across the time interval $[k_1,k_t]$ is denoted by $\overline{\mathcal{G}} = (\mathcal{V},\overline{\mathcal{E}})$, where $\overline{\mathcal{E}}=\bigcup_{j=1}^{t}\mathcal{E}[k_j]$.
The subgraph of $\mathcal{G}[k] = (\mathcal{V},\mathcal{E}[k])$ induced by the node set $\mathcal{H}\subset\mathcal{V}$ is the subgraph $\mathcal{G}_\mathcal{H}[k]=(\mathcal{V}(\mathcal{H}),\mathcal{E}(\mathcal{H})[k])$, where $\mathcal{V}(\mathcal{H})=\mathcal{H}$ and $\mathcal{E}(\mathcal{H})[k]=\{(i,j)\in \mathcal{E}[k]: i,j\in \mathcal{H}\}$.

An $l$-hop path from source node $i_1$ to destination node $i_{l+1}$ is a sequence of distinct nodes $(i_1, i_2, \dots, i_{l+1})$, where $(i_j, i_{j+1})\in \mathcal{E}[k] $ for $j=1, \dots, l$. Node $i_{l+1}$ is said to be reachable from node $i_1$ at time $k$. 
Let $\mathcal{N}_i^{l-}[k]$ be the set of nodes that can reach node $i$ via paths of at most $l$ hops at time $k$.
Let $\mathcal{N}_i^{l+}[k]$ be the set of nodes that are reachable from node $i$ via paths of at most $l$ hops at time $k$. Node $i$ is included in both sets above.
The $l$-th power of the graph $\mathcal{G}[k]$, denoted by $\mathcal{G}^l[k]$, is a multigraph with $\mathcal{V}$ and a directed edge from node $j$ to node $i$ is defined by a path of length at most $l$ from $j$ to $i$ in $\mathcal{G}[k]$. The adjacency matrix $A[k] = [a_{ij}[k] ]$ of $\mathcal{G}^l[k]$ is given by $\alpha \leq a_{ij}[k]<1$ if $j\in \mathcal{N}_i^{l-}[k]$ and otherwise $a_{ij}[k] = 0$, where $\alpha > 0$ is fixed and $\sum_{j=1,j\neq i}^{n} a_{ij}[k]\leq 1, \forall k$. 
%Let $L[k] = [b_{ij}[k] ]$ be the Laplacian matrix of $\mathcal{G}^l[k]$, where $b_{ii}[k] =\sum_{j=1,j\neq i}^{n}a_{ij}[k]$, $b_{ij}[k] = -a_{ij}[k]$ for $ i\neq j$.

Next, we describe our communication model. Node $i_1$ can send its own messages to an $l$-hop neighbor $i_{l+1}$ via different paths at time $k$.
We represent a message as a tuple $m=(w,P)$, where $w=\mathrm{value}(m)\in \mathbb{R}$ is the message content and $P=\mathrm{path}(m)$ indicates the path via which $m$ is transmitted. 
At time $k\geq 0$, each normal node $i$ exchanges the messages $m_{ij}[k]=(x_i[k],P_{ij}[k])$ consisting of its state $x_i[k]\in \mathbb{R}$ along each path $P_{ij}[k]$ with its multi-hop neighbor $j$ via the relaying process in \cite{yuan2021resilient}.
Denote by $\mathcal{V}(P)$ the set of nodes in $P$.

\subsection{System Model and Algorithm}\label{problemsetting}
In our leader-follower consensus problem, we consider the time-varying MAS modeled by the graph $\mathcal{G}[k] = (\mathcal{V},\mathcal{E}[k])$, where $\mathcal{V}$ consists of the set of leader agents $\mathcal{L}$ and the set of follower agents $\mathcal{W}$ with $\mathcal{L}\cup \mathcal{W}=\mathcal{V}$ and $\mathcal{L}\cap \mathcal{W}=\emptyset$. Leader agents in $\mathcal{L}$ propagate a desired reference scalar value to follower agents in $\mathcal{W}$, and thereafter, follower agents achieve consensus on that reference value. However, during the propagation, if adversarial agents are in presence, they may misbehave and try to prevent normal agents from reaching leader-follower consensus. 

To characterize our system under attacks, we denote the set of adversarial agents by $\mathcal{A}\subset \mathcal{V}$ and denote the set of non-adversarial, normal agents by $\mathcal{N}=\mathcal{V}\setminus\mathcal{A}$ with $n_N=|\mathcal{N}|$. Formal definitions of adversarial agents are given later. Then, the sets of normal leader agents and normal follower agents are denoted by $\mathcal{L}^\mathcal{N}=\mathcal{L}\cap \mathcal{N}$ and $\mathcal{W}^\mathcal{N}=\mathcal{W}\cap \mathcal{N}$, respectively.

\begin{table}[t]\label{table1}
	\caption{Types of leader agents.}
	\renewcommand\arraystretch{1.5}
	\centering
	\begin{tabular}{c|c|c} 
		\hline
		             & Known by followers & Not known by followers \\  
		\hline
		Secure      &  Case 1: Proposition~\ref{proposition_secure}           & Case 2: Theorem~\ref{theorem_firstorder} \\
		\hline
		Not secure  &  Case 3: Theorem~\ref{theorem_firstorder}           & Case 4: Theorem~\ref{theorem_firstorder} \\
		\hline
	\end{tabular}
	\vspace*{-4mm}
\end{table}

Before we present our system model, we introduce an important categorization of the types of leader agents in the literature, e.g., \cite{usevitch2020resilient,rezaee2021resiliency}. For the leader agents, there are four cases depending on if they are secure (i.e., no faults) or not, and if they are known by followers or not (see Table~I). In this paper, we mainly focus on the cases where leader agents are not secure (i.e., Cases 3 and 4). Then we will give an analysis for the cases where leader agents are secure (i.e., Cases 1 and 2). Such an analysis is closely related to the one for the insecure leader cases.

%\begin{remark}
%	4 cases: 
%	If leaders are guaranteed to be normal or not
%	If leaders are known to followers or not
%	
%	Cases 1 and 2: If leaders are not guaranteed to be normal (known and not known), we use the MW-MSR algorithm and the corresponding condition
%	
%	Case 3: If leaders are normal and not known, we use the MW-MSR algorithm and there needs f+1 leaders and f+1 second layer nodes. (Each follower must have in-degree no less than f+1)
%	
%	Case 4: If leaders are normal and known, followers other than the second layer nodes use the MW-MSR algorithm, the condition follows the dynamic work
%\end{remark}

At each time $k$, each normal leader agent $d\in \mathcal{L}^\mathcal{N}$ updates its value according to a reference function $r[k] \in \mathbb{R}$ as
\begin{equation}\label{leader}
x_d[k + 1] = r [k],
\end{equation}
where $r[k]$ is assumed to be constant and the same for all normal leaders. We note that it can also be a staircase function and asymptotic tracking can be achieved. See Section~\ref{sec_example1}.

We define the resilient leader-follower consensus problem of this paper, which is also studied in \cite{usevitch2020resilient}.
\begin{problem}\label{problem}
	We say that the normal agents in $\mathcal{N}$ reach resilient leader-follower consensus if for any possible sets and behaviors of the adversaries in $\mathcal{A}$ and any state values of the normal agents in $\mathcal{N}$, the following condition is satisfied:
	\begin{equation}\label{reach_consensus}
		\lim_{k\to \infty}\medspace \max_{i\in \mathcal{W}^\mathcal{N}, \medspace d\in \mathcal{L}^\mathcal{N}} \medspace  |x_i[k]-x_d[k]|=0.
	\end{equation}
\end{problem}
\vspace{2mm}

To avoid being affected by adversarial agents, each normal follower agent $i$ updates its value according to the MW-MSR algorithm from \cite{yuan2021resilient}, which is presented in Algorithm~1. The notion of minimum message cover (MMC) \cite{su2017reaching} is crucial in Algorithm~1, which is defined as follows.

\begin{definition} For a graph $\mathcal{G} = (\mathcal{V},\mathcal{E})$, let $\mathcal{M}$ be a set of messages transmitted through $\mathcal{G}$, and let $\mathcal{P}(\mathcal{M})$ be the set of message paths of all the messages in $\mathcal{M}$, i.e., $\mathcal{P}(\mathcal{M}) =\{\mathrm{path}(m):m \in \mathcal{M}\}$. A \textit{message cover} of $\mathcal{M}$ is a set of nodes $\mathcal{T}(\mathcal{M})\subset \mathcal{V}$ whose removal disconnects all message paths, i.e., for each path $P\in \mathcal{P}(\mathcal{M})$, we have $\mathcal{V}(P)\cap \mathcal{T}(\mathcal{M})\neq \emptyset$. In particular, a \textit{minimum message cover} of $\mathcal{M}$ is defined by
	\begin{equation*}
		\mathcal{T}^*(\mathcal{M})\in	\arg \min_{\substack{ \mathcal{T}(\mathcal{M}): \textup{ Cover of } \mathcal{M}}} 	\left|  \mathcal{T} (\mathcal{M})\right| . 
	\end{equation*}
\end{definition}

\vspace{2mm}

\begin{algorithm}[t]
	\caption{MW-MSR Algorithm }
	\LinesNumbered 
	\KwIn{Node $i$ knows $x_i[0]$, $\mathcal{N}_i^{l-}[k]$, $\mathcal{N}_i^{l+}[k]$. }
	
	\For{$k\geq0$}{
		
		\SetKwBlock{newbox}{1) Exchange messages:}{}
		\newbox{
			\SetAlgoVlined
			Send $m_{ij}[k]=(x_i[k],P_{ij}[k])$ to $\forall j\in \mathcal{N}_i^{l+}[k]$. 
			
			Receive $m_{ji}[k]=(x_j[k],P_{ji}[k])$ from $\forall j\in \mathcal{N}_i^{l-}[k]$ and store them in $\mathcal{M}_i[k]$.
			
			Sort $\mathcal{M}_i[k]$ in an increasing order based on the message values (i.e., $x_j[k]$ in $m_{ji}[k]$).
		}
		
		\SetKwBlock{newbox}{2) Remove extreme values:}{}
		\newbox{
			\SetAlgoVlined
			
			(a) Define two subsets of $\mathcal{M}_i[k]$:
			\begin{equation*}
				\overline{\mathcal{M}}_i[k]=\{ m\in \mathcal{M}_i[k]: \mathrm{value}(m)> x_i[k]  \},
			\end{equation*}
			\begin{equation*}
				\underline{\mathcal{M}}_i[k]=\{ m\in \mathcal{M}_i[k]: \mathrm{value}(m)< x_i[k]  \}.
			\end{equation*}
			
			(b) Get $\overline{\mathcal{R}}_i[k]$ from $\overline{\mathcal{M}}_i[k]$:
			
			\vspace{1mm}
			
			\eIf{   $\left|  \mathcal{T}^* (\overline{\mathcal{M}}_i[k])\right| <f$ }{
				
				\vspace{1mm}
				$\overline{\mathcal{R}}_i[k]=\overline{\mathcal{M}}_i[k]$;
			}
			{
				Choose $\overline{\mathcal{R}}_i[k]$ s.t. (i)
				$\forall m\in \overline{\mathcal{M}}_i[k]\setminus \overline{\mathcal{R}}_i[k]$, $\forall m'\in \overline{\mathcal{R}}_i[k]$, $\mathrm{value}(m) \leq \mathrm{value}(m')  \medspace\medspace \medspace \medspace  $ 
				and (ii) $\left|  \mathcal{T}^* (\overline{\mathcal{R}}_i[k])\right| =f$. 
				
			}
			
			(c) Similar to (b), get $\underline{\mathcal{R}}_i[k]$ from $\underline{\mathcal{M}}_i[k]$, which contains smallest message values.
			
			(d) $\mathcal{R}_i[k]=\overline{\mathcal{R}}_i[k]\cup\underline{\mathcal{R}}_i[k]$.
		}
		
		\SetKwBlock{newbox}{3) Update:}{}
		\newbox{
			\SetAlgoVlined
			$\medspace\medspace \medspace\medspace\medspace a_{i}[k]=1/(\left| \mathcal{M}_i[k]\setminus \mathcal{R}_i[k] \right| )$,
			\begin{equation}
				x_i[k+1]=\sum_{m\in \mathcal{M}_i[k]\setminus \mathcal{R}_i[k]} a_{i}[k] \medspace\medspace \mathrm{value}(m).  \label{msrupdate}
			\end{equation}
		}
		\KwOut{$x_i[k+1]$.}
	}
	%\vspace*{-1.0mm}
\end{algorithm}
\vspace{-1.0mm}

In Algorithm~1, normal follower $i$ can remove the largest and smallest values from exactly $f$ nodes located within $l$ hops. With multi-hop relays, node $i$ might get multiple values from the same neighbor at each step. Thus, the MMC is needed for determining the number of extreme values to be removed. A more detailed explanation of Algorithm~1 can be found in \cite{yuan2021resilient}. In Remarks~\ref{remark_robustness} and \ref{remark_leaderless}, we discuss how the algorithm functions differently in the consensus problems under leaderless and leader-follower settings and the required network topologies for the algorithm to properly function.

\subsection{Threat Model}

We introduce our threat models extended from those studied in \cite{vaidya2012iterative,leblanc2013resilient,yuan2021resilient}.
\begin{definition}
	\textit{($f$-total/$f$-local set)}
	The set of adversary nodes $\mathcal{A}$ is said to be $f$-total
	if it contains at most $f$ nodes, i.e., $\left| \mathcal{A}\right| \leq f$.
	Similarly, it is said to be $f$-local (in $l$-hop neighbors)
	if any normal node $i$ has at most $f$ adversary nodes as its $l$-hop neighbors at any time $k$, i.e., $\left|\mathcal{N}_i^{l-}[k] \cap \mathcal{A}\right| \leq f, \forall i\in \mathcal{N}, \forall k$.
\end{definition}

\begin{definition}
	\textit{(Byzantine nodes)}
	An adversary node $i\in \mathcal{A}$ is said to be Byzantine
	if it arbitrarily modifies its own value and relayed values and sends different state values and relayed values to its neighbors at each step.
\end{definition}

Byzantine agents have been studied in numerous existing works and they are usually employed to characterize possible misbehaviors of adversarial agents in point-to-point networks \cite{Lynch,vaidya2012iterative,leblanc2013resilient,yuan2022asynchronous}. In contrast, there is the class of malicious agents, which are less adversarial than Byzantine agents as they are limited to send the identical false information to neighbors. Such agents form a suitable model for broadcast networks \cite{leblanc2013resilient, dibaji2017resilient} and wireless sensor networks \cite{iori2024resilient}.

As commonly done in the literature \cite{su2017reaching,Lynch,yuan2021resilient}, we assume that normal nodes have access to the neighbors' topology information and the bound on the number of adversaries.
\begin{assumption}
	Each node $i\in \mathcal{N}$ knows the value of $f$ and the topology information of its neighbors up to $l$ hops at each time $k$.
\end{assumption}

Moreover, to keep the problem tractable, we introduce the following assumption \cite{su2017reaching,yuan2021resilient}. It is merely introduced for ease of analysis. In fact, manipulating message paths can be easily detected and hence does not create problems. Related discussions can be found in \cite{su2017reaching,yuan2021resilient}.

\begin{assumption}
	Each node $i\in \mathcal{A}$ can manipulate its state $x_i[k]$ and the values in messages that they send or relay, but cannot change the path $P$ in such messages. 
\end{assumption}

\section{Jointly Robust Following Graphs}\label{sec_robustness}

In this section, we introduce a novel notion of jointly robust following graphs, which plays a key role in our resilient leader-follower consensus problem in time-varying networks.

\subsection{Jointly r-Reachable Followers with l Hops}

To establish a tight graph condition for our problem, we start with the definition of jointly reachable followers.
Under multi-hop communication, neighbors' values from the outside of the set to which each follower belongs may come from remote nodes and are not restricted to direct neighbors.

\begin{definition}\label{reachability}
	\textit{(Jointly reachable followers)}
	Consider the time-varying graph $\mathcal{G}[k] = (\mathcal{V},\mathcal{E}[k])$ with $l$-hop communication. For $r\in \mathbb{Z}_{>0}$ and a nonempty set $\mathcal{S}\subset \mathcal{W}$, we say that a node $i\in \mathcal{S}$ is a jointly $r$-reachable follower with $l$ hops in time interval $[k_t, k_{t+1})_{t\in\mathbb{Z}_{\geq 0}}$ if there exists a time $K_i \in [k_t, k_{t+1})$ such that
	\begin{equation*}
	|\mathcal{I}_{i, \mathcal{S}}[K_i]| \geq r, 
	\end{equation*}
	where $\mathcal{I}_{i, \mathcal{S}}[K_i]$ is the set of independent paths\footnote{Note that in these paths, only node $i$ is common.} to node $i$ of at most $l$ hops originating from nodes outside $\mathcal{S}$ at time $K_i$.
\end{definition}

Note that for node $i\in \mathcal{S}$ to satisfy $|\mathcal{I}_{i, \mathcal{S}}[K_i]| \geq r$ at time $K_i$, there should be at least $r$ source nodes outside $\mathcal{S}$ and an independent path of length at most $l$ hops from each of the $r$ source nodes to node $i$. Such source nodes may or may not include leader nodes.

\begin{figure}[t]
	\centering
	\subfigure[\scriptsize{ }]{
		\includegraphics[width=0.295\linewidth ]{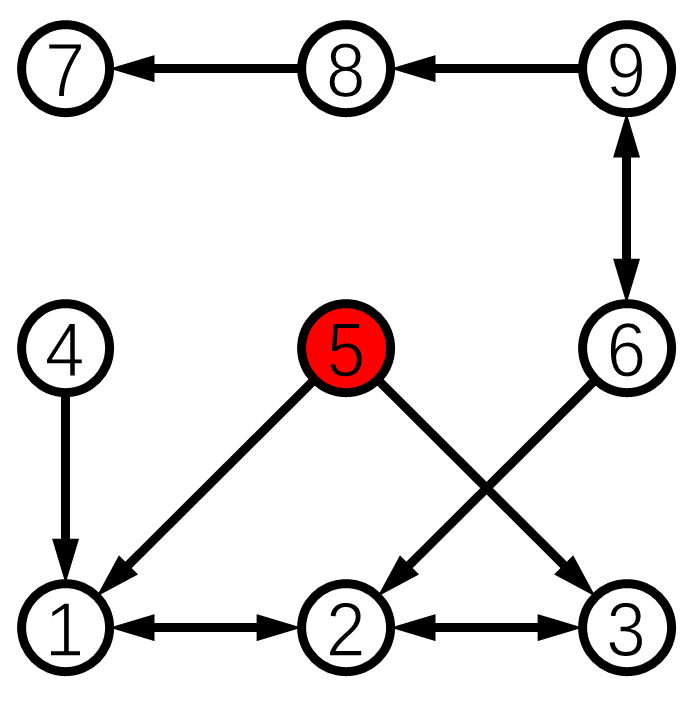}
	}
	\subfigure[\scriptsize{ }]{
		\includegraphics[width=0.295\linewidth ]{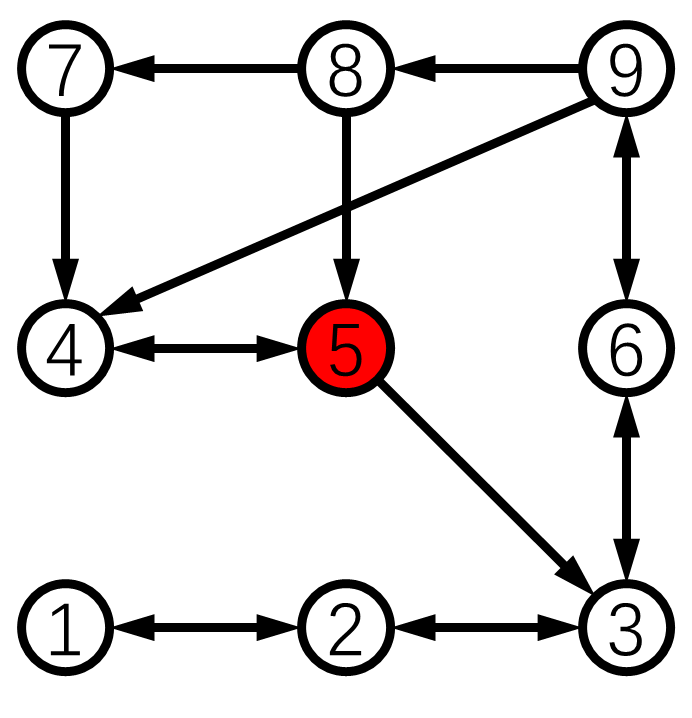}
	}
	\subfigure[\scriptsize{ }]{
		\includegraphics[width=0.295\linewidth ]{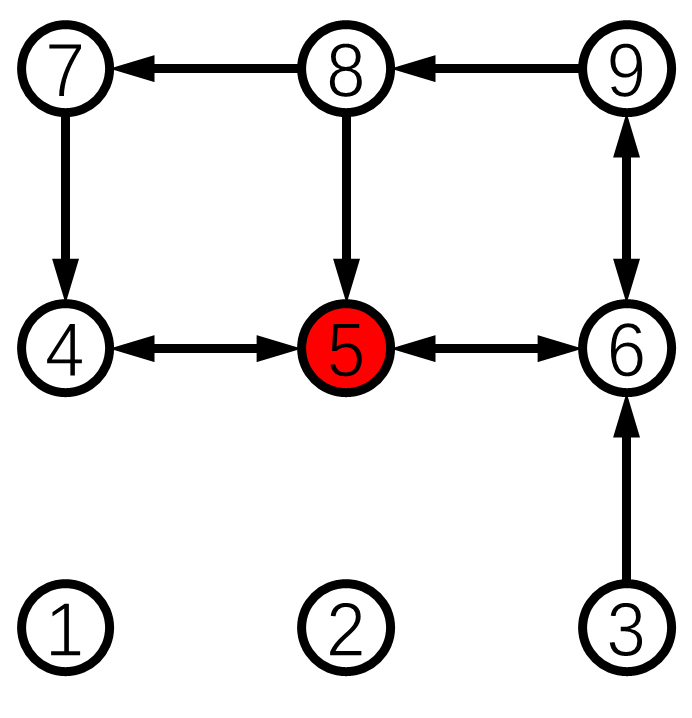}
	}
	\caption{The graph $\mathcal{G}[k]$ is not a jointly 2-robust following graph with 1 hop but is a jointly 2-robust following graph with 2 hops under the 1-local model. The set of leader agents $\mathcal{L}$ is \{7, 8, 9\}.}
	\label{9node2}
	\vspace*{-2.5mm}
\end{figure}

\begin{figure}[t]
	\centering
	\subfigure[\scriptsize{ }]{
		\includegraphics[width=0.295\linewidth ]{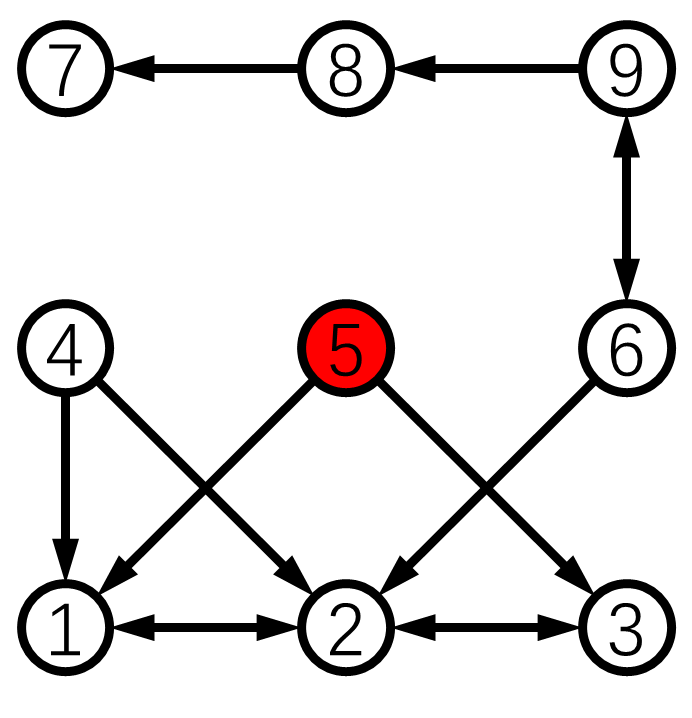}
	}
	\subfigure[\scriptsize{ }]{
		\includegraphics[width=0.295\linewidth ]{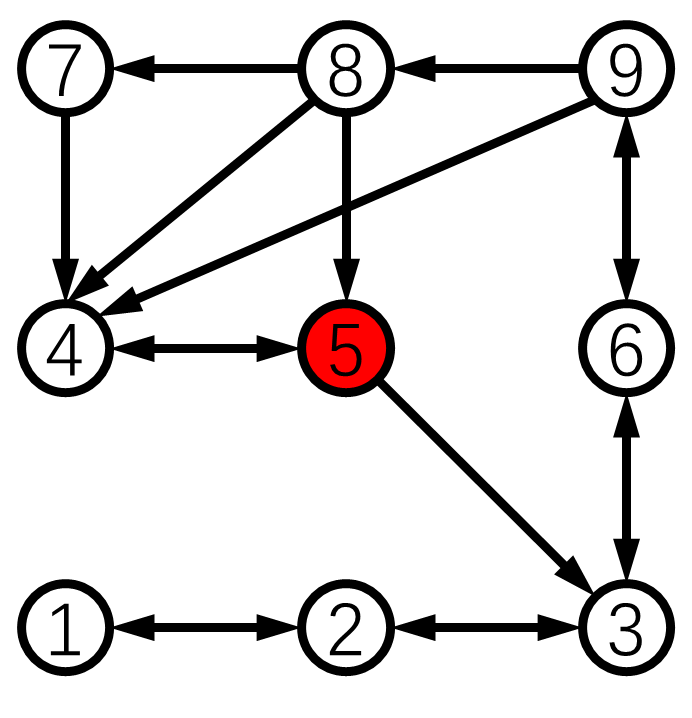}
	}
	\subfigure[\scriptsize{ }]{
		\includegraphics[width=0.295\linewidth ]{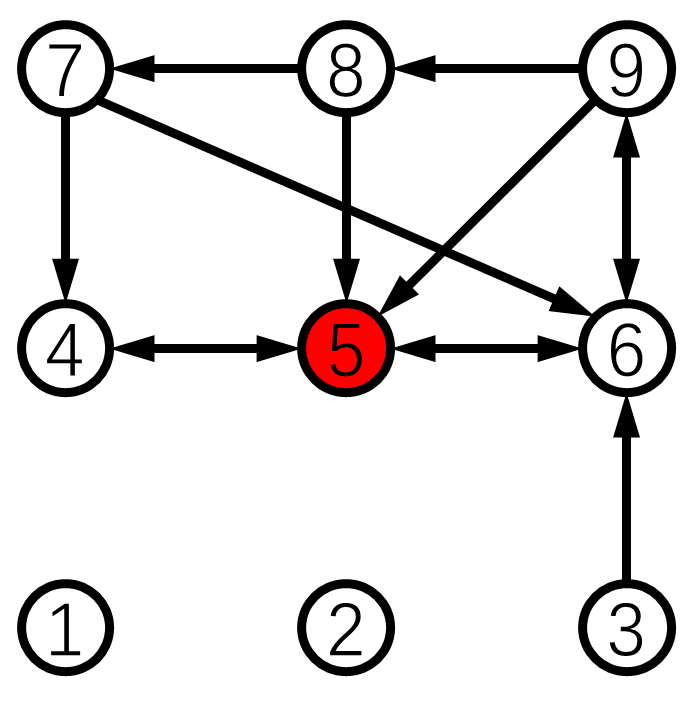}
	}
	\caption{The graph $\mathcal{G}[k]$ is a jointly 2-robust following graph with 1 hop under the 1-local model. The set of leader agents $\mathcal{L}$ is \{7, 8, 9\}.}
	\label{9node1}
	\vspace*{-2.5mm}
\end{figure}

\subsection{Jointly r-Robust Following Graphs with l Hops}
%Jointly $(r, s)$-robust with $l$ hops.

Now, we generalize the notion of jointly reachable followers to the entire graph and define jointly $r$-robust following graphs with $l$ hops as follows.

\begin{definition}\label{robust_following}
	\textit{(Jointly robust following graphs)}
	Consider the time-varying digraph $\mathcal{G}[k] = (\mathcal{V},\mathcal{E}[k])$ with the set of leaders $\mathcal{L}\subset \mathcal{V}$. Let $\mathcal{F} \subset \mathcal{V}$ and denote by $\mathcal{G}_{\mathcal{H}}[k]$ the subgraph of $\mathcal{G}[k]$ induced by node set $\mathcal{H}=\mathcal{V}\setminus\mathcal{F}$.
	Graph $\mathcal{G}[k]$ is said to be a jointly $r$-robust following graph with $l$ hops (under the $f$-local model) if for any $f$-local set $\mathcal{F}$, the subgraph $\mathcal{G}_{\mathcal{H}}[k]$ satisfies that there exists an infinite sequence of uniformly bounded time intervals (ISUBTI) $\{[k_t,k_{t+1})\}_{t\in\mathbb{Z}_{\geq 0}}$ with $ k_t< k_{t+1}$ and $k_0=0$ such that in each time interval, for every nonempty subset $\mathcal{S}\subseteq \mathcal{H}\setminus \mathcal{L}$, the following condition holds:
	\begin{align*}
		 | \mathcal{Z}_{\mathcal{S}}^{r}[k_t, k_{t+1}) | \geq 1  , \textup{where}
	\end{align*}
	$\mathcal{Z}_{\mathcal{S}}^{r}[k_t, k_{t+1})=\{i\in \mathcal{S} :  \exists K_i \in [k_t, k_{t+1}) \medspace \textup{s.t.} \medspace 
	|\mathcal{I}_{i, \mathcal{S}}[K_i]| \geq r \}$.
\end{definition}
\vspace{2mm}

%We note that the notion of joint $(r,s)$-robustness with 1 hop is equivalent to the notion of joint $(r,s)$-robustness in \cite{wen2023joint}. Moreover, when $l=1$, the consensus results for the malicious model in this paper coincide with the corresponding results in \cite{wen2023joint}.

\begin{figure}[t]
	\centering
	\subfigure[\scriptsize{ }]{
		\includegraphics[width=0.57\linewidth ]{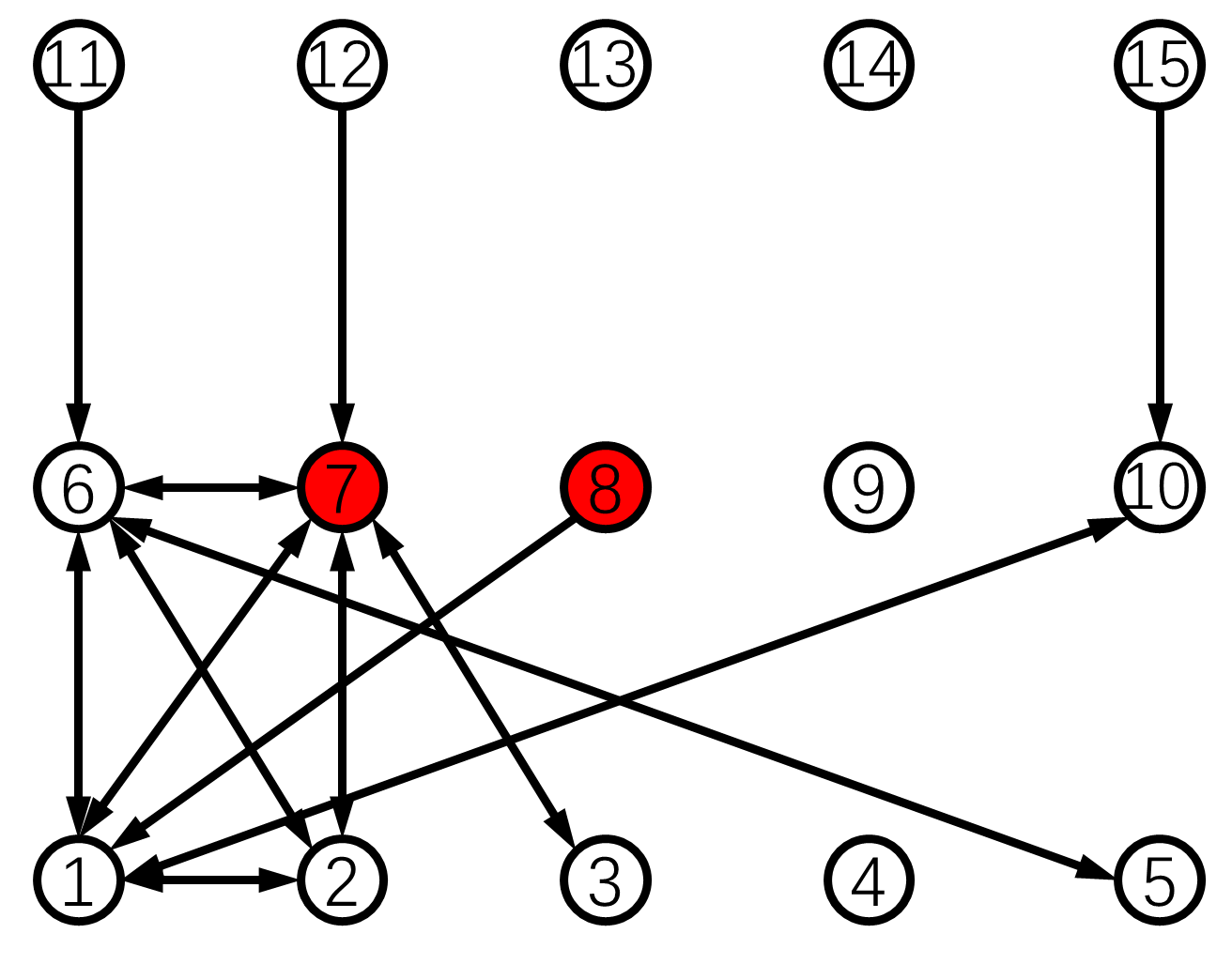}
	}
	\subfigure[\scriptsize{ }]{
		\includegraphics[width=0.57\linewidth ]{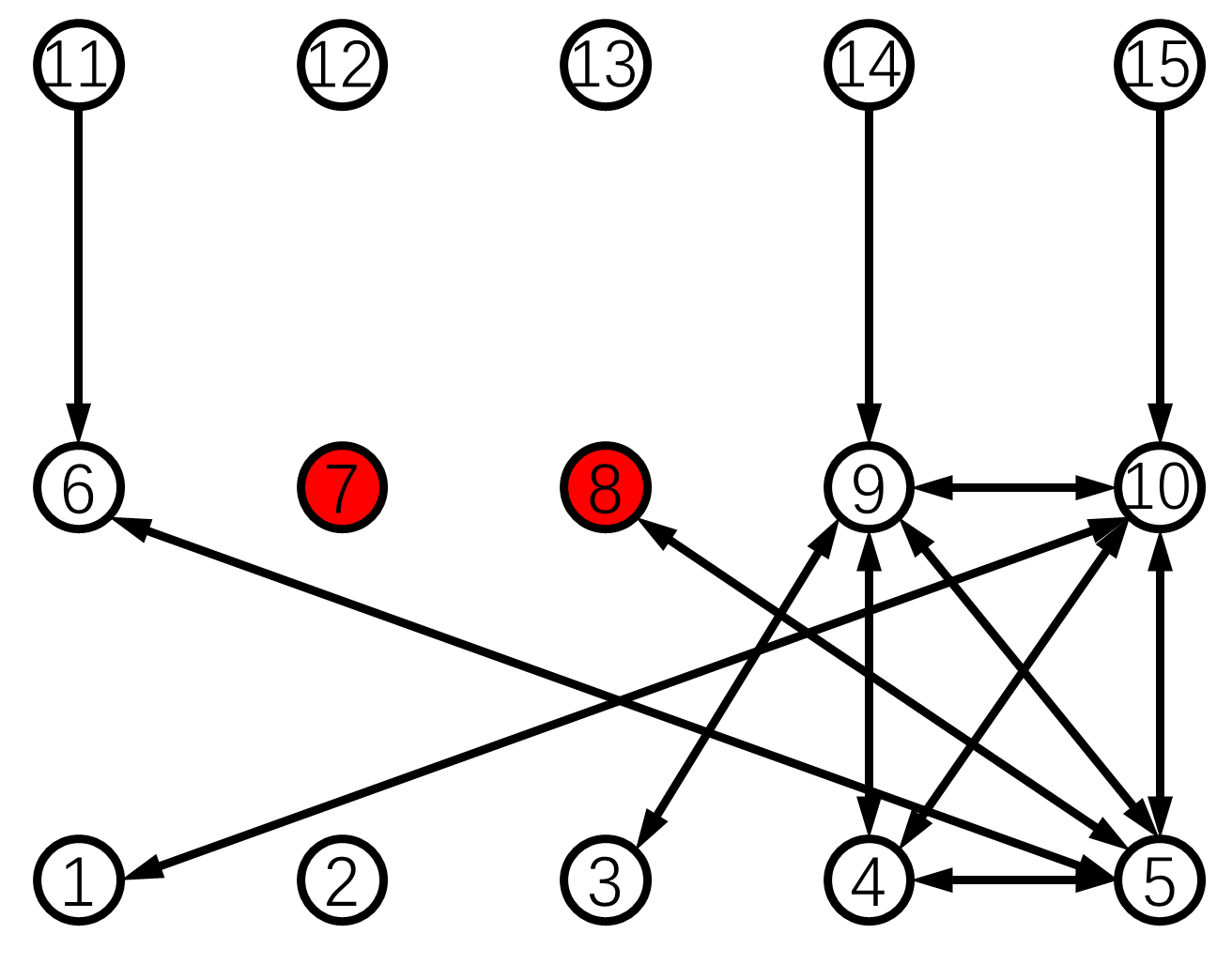}
	}
	\subfigure[\scriptsize{ }]{
		\includegraphics[width=0.57\linewidth ]{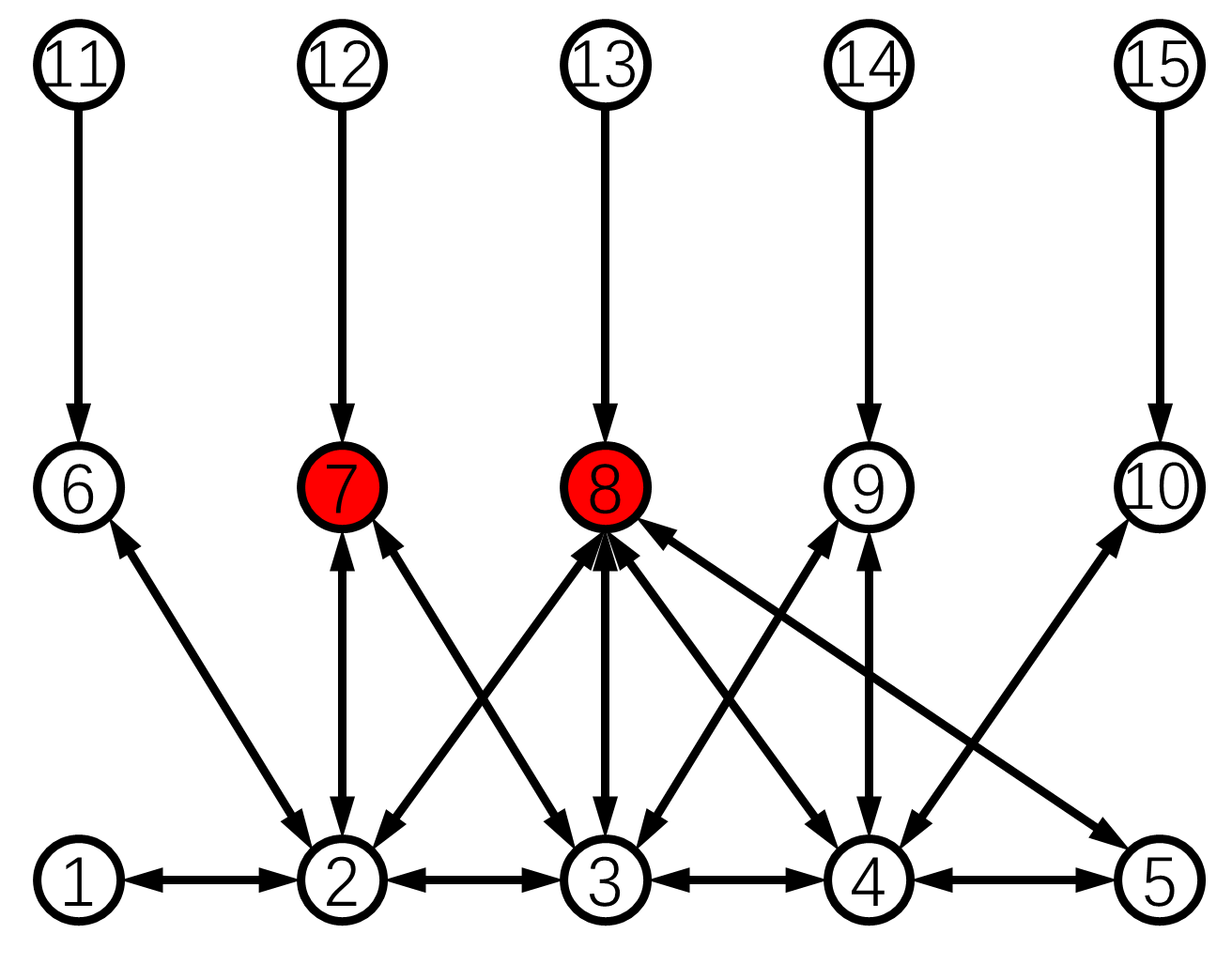}
	}
	\caption{The graph $\mathcal{G}[k]$ is not a jointly 3-robust following graph with 1 hop but is a jointly 3-robust following graph with 3 hops under the 2-local model. The set of leader agents $\mathcal{L}$ is \{11, 12, 13, 14, 15\}.}
	\label{15node}
	\vspace*{-2.5mm}
\end{figure}

We have reported in \cite{yuan2021resilient} that for static networks, graph robustness guaranteeing resilient consensus increases as the relay range $l$ increases. We emphasize that multi-hop relays can also improve the network robustness for our resilient leader-follower consensus problem.
We illustrate this idea using the time-varying networks in Figs.~\ref{9node2}--\ref{15node}, where the communication topology of each network switches among the three graphs. Here, the order of the graphs is arbitrary, but each graph should appear at least once in each interval $[k_t, k_{t+1})$. Note that when $l=1$, counting independent paths is equivalent to counting in-neighbors.

\begin{example}\label{discussion9node}
	Consider the graph $\mathcal{G}[k]$ in Fig.~\ref{9node2}.
	It should be noted that the notion of jointly $r$-robust following graphs with $l$ hops depends on the choice of set $\mathcal{F}$.
	We claim that this graph is not a jointly $2$-robust following graph with $1$ hop under the $1$-local model. For instance, after removing the node set $\mathcal{F}=\{5\}$, the remaining graph $\mathcal{G}_{\mathcal{H}}[k]$ does not satisfy the condition in Definition~\ref{robust_following}.
	The reason is that in the follower set $\mathcal{S}=\{1,2,3,6\}$, none of the nodes has 2 in-neighbors outside $\mathcal{S}$, i.e., $\mathcal{Z}_{\mathcal{S}}^{2}[k_t, k_{t+1})=\emptyset$ when $l=1$. 
	In fact, for this network to become a jointly $2$-robust following graph with $1$ hop, four more edges are needed as depicted in Fig.~\ref{9node1}. Alternatively, we can also increase the network robustness by increasing the relay range. For example, when $l=2$, for node sets $\mathcal{F}=\{5\}$ and $\mathcal{S}=\{1,2,3,6\}$, node 2 has 2 independent two-hop paths originating from nodes outside $\mathcal{S}$, i.e., $\mathcal{Z}_{\mathcal{S}}^{2}[k_t, k_{t+1})\neq \emptyset$.
	Moreover, one can verify all the combinations of node subsets and conclude that this graph is a jointly $2$-robust following graph with $2$ hops under the $1$-local model.
\end{example}

\begin{example}\label{discussion15node}
	Consider the larger graph in Fig.~\ref{15node}.
	It is not a jointly $3$-robust following graph with $1$ hop under the $2$-local model.
	Observe that after removing the node set $\mathcal{F}=\{7,8\}$, the remaining graph $\mathcal{G}_{\mathcal{H}}[k]$ does not satisfy the condition in Definition~\ref{robust_following}. Specifically, in the follower set $\mathcal{S}=\mathcal{W}\setminus\mathcal{F}$, none of the nodes has 3 in-neighbors outside $\mathcal{S}$, i.e., $\mathcal{Z}_{\mathcal{S}}^{3}[k_t, k_{t+1})=\emptyset$ when $l=1$. 
	However, after verifying all the combinations of node subsets, we can conclude that this graph is a jointly $3$-robust following graph with $3$ hops under the $2$-local model.
\end{example}

Then, we introduce a simpler version of Definition~\ref{robust_following} for static networks as follows.

\begin{definition}\label{robust_following_static}
	\textit{(Robust following graphs)}
	If a static graph $\mathcal{G}$ is a jointly $(f+1)$-robust following graph with $l$ hops where $k_{t+1}-k_t=1,  \forall [k_t, k_{t+1})$, we simply say that $\mathcal{G}$ is an $(f+1)$-robust following graph with $l$ hops.
\end{definition}

\begin{remark}\label{remark_robustness}
	There is an intuitive method for achieving leader-follower consensus using the leaderless consensus results. That is, adding sufficient number of leaders and corresponding edges to a follower subgraph satisfying the condition of strict robustness with $l$ hops. This condition is necessary and sufficient for achieving leaderless resilient consensus under the Byzantine model \cite{yuan2022asynchronous}. However, we will prove in Theorem~\ref{theorem_firstorder} that our condition of robust following graphs is necessary and sufficient for resilient leader-follower consensus. Hence, our new condition is tighter for the leader-follower case compared to the one based on leaderless results. This can also be observed from the example graph in Fig~\ref{9node1}. Consider the union of the graphs in Fig~\ref{9node1}. Its subgraph of followers does not satisfy the strict robustness condition\footnote{If graph $\mathcal{G}$ is $(f+1)$-strictly robust with $l$ hops, then its minimum in-degree must be no less than $2f+1$.} since node 4 has only one incoming edge from neighbor 5. Besides, the analysis for resilient leader-follower consensus is very different from the one for leaderless consensus \cite{yuan2022asynchronous}, as we will see in Theorem~\ref{theorem_firstorder} and Remark~\ref{remark_leaderless}.
\end{remark}

%\begin{definition}\label{robust_following_static}
%	\textit{(Robust following graphs)}
%	Consider the static digraph $\mathcal{G} = (\mathcal{V},\mathcal{E})$ with the set of leaders $\mathcal{L}\subset \mathcal{V}$. Let $\mathcal{F} \subset \mathcal{V}$ and denote the subgraph of $\mathcal{G} = (\mathcal{V},\mathcal{E})$ induced by node set $\mathcal{H}=\mathcal{V}\setminus\mathcal{F}$ as $\mathcal{G}_{\mathcal{H}}$.
%	Graph $\mathcal{G}$ is said to be an $r$-robust following graph with $l$ hops (under the $f$-local model) if for any $f$-local set $\mathcal{F}$, the subgraph $\mathcal{G}_{\mathcal{H}}$ satisfies that for every nonempty subset $\mathcal{S}\subseteq \mathcal{H}\setminus \mathcal{L}$, the following condition holds:
%	\begin{align*}
%    | \mathcal{Z}_{\mathcal{S}}^{r} | =| \{i\in \mathcal{S} : |\mathcal{I}_{i, \mathcal{S}}| \geq r \} | \geq 1 .
%	\end{align*}
%\end{definition}
%\vspace{2mm}

Next, we introduce the notion of normal network consisting of only normal nodes.

\begin{definition}
	\textit{(Normal network)}
	For a network $\mathcal{G}[k] = (\mathcal{V}, \mathcal{E}[k])$, define the normal
	network of $\mathcal{G}[k]$, denoted by $\mathcal{G}_{\mathcal{N}}[k]$, as the network induced by the normal nodes, i.e., $\mathcal{G}_{\mathcal{N}}[k] = (\mathcal{N},\mathcal{E}_{\mathcal{N}}[k] )$, where $\mathcal{E}_{\mathcal{N}}[k]$ is the set of directed edges among normal nodes at each time $k$.
\end{definition}

It is worth noting that we could have a tighter graph condition defined on the normal network for our resilient leader-follower consensus problem. Such a condition is that the normal network $\mathcal{G}_{\mathcal{N}}[k]$ satisfies the property presented in Definition~\ref{robust_following} for $\mathcal{G}_{\mathcal{H}}[k]$ there. In fact, if $\mathcal{G}[k]$ is a jointly $r$-robust following graph with $l$ hops under the $f$-local model, then the normal network $\mathcal{G}_{\mathcal{N}}[k]$ is guaranteed to satisfy the abovementioned property. However, the graph condition on the normal network cannot be verified prior to the actual deployment of the proposed algorithm. Therefore, similar to the works \cite{vaidya2012iterative,su2017reaching,yuan2021resilient}, we define our condition on the original network topology $\mathcal{G}[k]$.

\section{Resilient Leader-Follower Consensus in First-Order MASs}

In this section, we see how the MW-MSR algorithm guarantees resilient leader-follower consensus in time-varying directed networks, where each normal follower utilizes the scalar values received from all of its $l$-hop neighbors synchronously to update its next value.

\subsection{Convergence Analysis}

Regarding the leader-follower consensus error, we define three variables. These are,  respectively, the maximum value, the minimum value of normal nodes in $\mathcal{N}$ at time $k$, and their differences given by
\begin{align}\label{consensus_error}
	\overline{x}[k]&= \max_{i\in \mathcal{W}^\mathcal{N}, \medspace d\in \mathcal{L}^\mathcal{N}} \medspace \{ x_i[k] , x_d[k] \}, \nonumber \\[1mm]
	\underline{x}[k]&= \min_{i\in \mathcal{W}^\mathcal{N}, \medspace d\in \mathcal{L}^\mathcal{N}} \medspace \{ x_i[k] , x_d[k] \}, \nonumber \\[1mm]
	V[k] &= \overline{x}[k]-\underline{x}[k]. 
\end{align}
%\vspace{-2mm}

We state a lemma describing behaviors of normal followers.

\begin{lemma}\label{lemma_safety}
	Consider the time-varying network $\mathcal{G}[k] = (\mathcal{V},\mathcal{E}[k])$ with $l$-hop communication, where each normal follower node $i\in \mathcal{W}^\mathcal{N}$ updates its value according to the MW-MSR algorithm with parameter $f$. Under the $f$-local adversarial set $\mathcal{A}$ and the assumption that $r [k]$ is constant $\forall k\in [k_1, k_e)$, the following statements hold $\forall k\in [k_1, k_e)$:
	\begin{enumerate}
		\item $x_i [k] \in [\underline{x}[k_1], \overline{x}[k_1]], \forall i\in \mathcal{N}$,
		\item  $\big[ \underline{x}[k+1], \overline{x}[k+1] \big] \subset \big[ \underline{x}[k], \overline{x}[k] \big]$.
	\end{enumerate}
\end{lemma}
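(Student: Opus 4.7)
The plan is to prove both statements jointly by induction on $k\in[k_1,k_e)$, with Statement 2 supplying the inductive step. Specifically, I aim to show $x_i[k+1]\in[\underline{x}[k],\overline{x}[k]]$ for every $i\in\mathcal{N}$; this is exactly $[\underline{x}[k+1],\overline{x}[k+1]]\subseteq[\underline{x}[k],\overline{x}[k]]$, and iterating it from $k=k_1$ yields Statement 1.

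For a normal leader $d\in\mathcal{L}^{\mathcal{N}}$ the argument is immediate. Since $r[k]$ is constant on $[k_1,k_e)$ and the normal leaders are taken to agree with this reference at $k_1$, $x_d[k+1]=r[k]=x_d[k]\in[\underline{x}[k],\overline{x}[k]]$ follows from the definitions of $\overline{x}[k]$ and $\underline{x}[k]$.

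The substantive content is for normal followers. Since the MW-MSR update (\ref{msrupdate}) expresses $x_i[k+1]$ as a uniform convex combination of the values in $\mathcal{M}_i[k]\setminus\mathcal{R}_i[k]$, it suffices to show each surviving message's value lies in $[\underline{x}[k],\overline{x}[k]]$. I would argue the upper bound by contradiction (the lower bound being symmetric): assume some $m\in\mathcal{M}_i[k]\setminus\mathcal{R}_i[k]$ has $\mathrm{value}(m)>\overline{x}[k]\geq x_i[k]$, so $m\in\overline{\mathcal{M}}_i[k]$. The inductive hypothesis forces every message in $\mathcal{M}':=\{m'\in\overline{\mathcal{M}}_i[k]:\mathrm{value}(m')>\overline{x}[k]\}$ to originate at an adversarial node, whence $\mathcal{N}_i^{l-}[k]\cap\mathcal{A}$ is a message cover of $\mathcal{M}'$. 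The $f$-local assumption then gives $|\mathcal{T}^*(\mathcal{M}')|\leq f$. Now examine the two cases in step~2(b) of Algorithm~1: if $|\mathcal{T}^*(\overline{\mathcal{M}}_i[k])|<f$, then $\overline{\mathcal{R}}_i[k]=\overline{\mathcal{M}}_i[k]\ni m$, contradicting $m\notin\mathcal{R}_i[k]$; otherwise $\overline{\mathcal{R}}_i[k]$ consists of topmost values with $|\mathcal{T}^*(\overline{\mathcal{R}}_i[k])|=f$, and since $\mathcal{M}'$ carries strictly larger values than any message in $\overline{\mathcal{M}}_i[k]\setminus\mathcal{M}'$ and has MMC at most $f$, this forces $\mathcal{M}'\subseteq\overline{\mathcal{R}}_i[k]$, giving $m\in\mathcal{R}_i[k]$ and again a contradiction.

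The step I expect to be most delicate is this last inclusion $\mathcal{M}'\subseteq\overline{\mathcal{R}}_i[k]$, since conditions~(i)--(ii) of step~2(b) do not, strictly speaking, pin down $\overline{\mathcal{R}}_i[k]$ uniquely when several top-valued selections share the same cover size. I would resolve this by appealing to the standard greedy implementation of step~2(b): process messages in decreasing value order and admit each to $\overline{\mathcal{R}}_i[k]$ as long as the updated MMC does not exceed $f$. Under this reading, every element of $\mathcal{M}'$ is admitted before any message with value $\leq\overline{x}[k]$ is considered, and the running MMC never exceeds $|\mathcal{T}^*(\mathcal{M}')|\leq f$ while $\mathcal{M}'$ is processed, ensuring $\mathcal{M}'\subseteq\overline{\mathcal{R}}_i[k]$ as required and closing the induction.
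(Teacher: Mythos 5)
Your proof is correct and follows essentially the same route as the paper's, which simply asserts in one line that step 2 of Algorithm~1 removes the possibly manipulated values from at most $f$ nodes within $l$ hops, so that \eqref{msrupdate} is a convex combination of values in $\big[\underline{x}[k],\overline{x}[k]\big]$; your MMC case analysis and the greedy reading of step~2(b) are exactly the details the paper leaves implicit. One small wording fix: a message in $\mathcal{M}'$ need not \emph{originate} at an adversarial node --- it may originate at a normal node and be falsified by an adversarial relay --- but since every such path still contains an adversarial node within $l$ hops of $i$, the set $\mathcal{N}_i^{l-}[k]\cap\mathcal{A}$ remains a message cover of $\mathcal{M}'$ and your bound $|\mathcal{T}^*(\mathcal{M}')|\leq f$ is unaffected.
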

\vspace{2mm}

\begin{proof}
Since $\forall d\in \mathcal{L}^\mathcal{N}$, $x_d [k]= r [k]$ is constant $\forall k\in [k_1, k_e)$, we have $x_d [k] \in [\underline{x}[k_1], \overline{x}[k_1]], \forall d\in \mathcal{L}^\mathcal{N}$. Then we consider any follower node $i\in \mathcal{W}^\mathcal{N}$.
For node $i$, the values used in \eqref{msrupdate} always lie within the interval $\big[ \underline{x}[k], \overline{x}[k] \big]$ for $k\geq k_1$.
This can be seen from the fact that at each time $k$, node $i$ removes the possibly manipulated values from at most $f$ nodes within $l$ hops in step 2 of Algorithm 1.
Hence, the update rule \eqref{msrupdate} uses a convex combination of the values in $\big[ \underline{x}[k], \overline{x}[k]\big]$ and it holds that $x_i[k+1] \in  \big[ \underline{x}[k], \overline{x}[k] \big], \forall i\in \mathcal{W}^\mathcal{N}, \forall k\in [k_1, k_e) $. Thus, we have
$\big[ \underline{x}[k+1], \overline{x}[k+1] \big] \subset \big[ \underline{x}[k], \overline{x}[k] \big] \subset \cdots \subset  \big[ \underline{x}[k_1], \overline{x}[k_1] \big], \forall i\in \mathcal{N}, \forall k\in [k_1, k_e).$
\end{proof}

The following theorem is the first main contribution of this paper.
Here, we characterize a necessary and sufficient condition for time-varying networks using the MW-MSR algorithm to achieve resilient leader-follower consensus under the $f$-local Byzantine model. 

\begin{theorem}\label{theorem_firstorder}
Consider the time-varying network $\mathcal{G}[k] = (\mathcal{V},\mathcal{E}[k])$ with $l$-hop communication, where each normal follower node $i\in \mathcal{W}^\mathcal{N}$ updates its value according to the MW-MSR algorithm with parameter $f$. Under the $f$-local adversarial set $\mathcal{A}$ and the assumption that $r [k]$ is constant $\forall k\in [k_1, \infty)$, resilient leader-follower consensus is achieved if and only if $\mathcal{G}[k]$ is a jointly $(f+1)$-robust following graph with $l$ hops.
\end{theorem}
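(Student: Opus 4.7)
The plan is to prove the two directions of Theorem~\ref{theorem_firstorder} separately, with necessity handled by an explicit counterexample and sufficiency handled by combining the safety result of Lemma~\ref{lemma_safety} with a progress argument driven by the jointly $(f+1)$-robust following property.

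For \emph{necessity}, I would argue by contraposition. Suppose $\mathcal{G}[k]$ fails to be a jointly $(f+1)$-robust following graph with $l$ hops. Then there is an $f$-local set $\mathcal{F}$ such that, on $\mathcal{H}=\mathcal{V}\setminus\mathcal{F}$ and regardless of ISUBTI, one can exhibit a nonempty follower subset $\mathcal{S}\subseteq\mathcal{H}\setminus\mathcal{L}$ every node of which receives fewer than $f+1$ independent $l$-hop paths originating outside $\mathcal{S}$ during the relevant interval. I would declare $\mathcal{F}$ to be the Byzantine adversary set, choose the reference $r$ to differ from a common initial value $b$ assigned to $\mathcal{S}$, and have the adversaries transmit $b$ along every path that enters $\mathcal{S}$. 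By Definition~1 the set of messages carrying external (non-$b$) values into any $i\in\mathcal{S}$ admits a cover of size at most $f$, so step 2 of Algorithm~1 strips them away; the remaining values are all $b$, hence $x_i[k]\equiv b$ for $i\in\mathcal{S}$, violating \eqref{reach_consensus}.

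For \emph{sufficiency}, Lemma~\ref{lemma_safety} already yields that $\overline{x}[k]$ is non-increasing and $\underline{x}[k]$ is non-decreasing, so $V[k]$ converges to some $V^\star\geq 0$ and it remains to prove $V^\star=0$. Fix an interval $[k_t,k_{t+1})$ from the ISUBTI supplied by Definition~\ref{robust_following} (applied with $\mathcal{F}=\mathcal{A}$, so that $\mathcal{H}\setminus\mathcal{L}=\mathcal{W}^{\mathcal{N}}$), and set $V_0=V[k_t]$. For $\epsilon\in(0,V_0/2)$ define the extremal sets
\begin{equation*}
\mathcal{X}_{\overline{x}}(\epsilon,k)=\{i\in\mathcal{N}:x_i[k]>\overline{x}[k_t]-\epsilon\},\quad
\mathcal{X}_{\underline{x}}(\epsilon,k)=\{i\in\mathcal{N}:x_i[k]<\underline{x}[k_t]+\epsilon\},
\end{equation*}
which are disjoint. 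I would build a shrinking sequence $\epsilon_{j+1}=\alpha\epsilon_j/2$ and argue inductively that, at each stage $j$, the set $\mathcal{X}_{\overline{x}}(\epsilon_j,\cdot)\cap\mathcal{W}^{\mathcal{N}}$ (or its counterpart for the minimum) loses at least one node inside $[k_t,k_{t+1})$. Indeed, applying the jointly $(f+1)$-robust following property to this set produces some $i$ with $f+1$ independent $l$-hop paths from outside the set at some step $K_i\in[k_t,k_{t+1})$. Because any cover of those independent paths has size at least $f+1$ while the $f$-local adversarial footprint in $i$'s $l$-hop neighborhood forms a cover of size at most $f$, the MMC trimming step must retain at least one message carrying a normal value bounded above by $\overline{x}[k_t]-\epsilon_j$. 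The weight lower bound $\alpha$ and the convexity of \eqref{msrupdate} then force $x_i[K_i+1]\leq\overline{x}[k_t]-\alpha\epsilon_j/2$, removing $i$ from $\mathcal{X}_{\overline{x}}(\epsilon_{j+1},K_i+1)$. The analogous argument works symmetrically for $\mathcal{X}_{\underline{x}}$. After at most $|\mathcal{N}|$ such exits (which fit inside the uniformly bounded interval length), one of the two extremal sets empties, yielding a contraction $V[k_{t+1}]\leq(1-\gamma)V_0$ with $\gamma\in(0,1)$ independent of $t$, hence geometric decay of $V$ across intervals.

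The main obstacle is the combinatorial bookkeeping in the inductive step: the offending subset $\mathcal{S}$ changes each time a node exits it, yet Definition~\ref{robust_following} only guarantees the existence of a suitable node at some (unspecified) time $K_i$ within the interval rather than at every step. I would address this by a nested induction over the tolerances $\epsilon_j$ and the cardinalities $|\mathcal{X}_{\overline{x}}(\epsilon_j,\cdot)|$, coupled with a case split according to whether $r$ equals $\overline{x}[k_t]$, equals $\underline{x}[k_t]$, or lies strictly between: in the boundary cases the normal leaders anchor one side and only the opposite extremal set needs to shrink, while in the strict-interior case both sides shrink independently by symmetric arguments. Once exit is guaranteed at each level $j$, the proof concludes by standard composition across intervals and by invoking the uniform upper bound on $k_{t+1}-k_t$ to keep $\gamma$ uniform.
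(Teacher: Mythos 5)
Your necessity argument is essentially the paper's: designate the offending $f$-local set $\mathcal{F}$ as the Byzantine set, initialize the deficient subset $\mathcal{S}$ at a common value $b\neq r$, and have the adversaries echo $b$ into $\mathcal{S}$ so that every external message set admits a cover of size at most $f$ and is discarded by step~2 of Algorithm~1; this is correct. The sufficiency direction also follows the paper's template (safety from Lemma~\ref{lemma_safety} plus a progress argument that peels nodes off extremal sets by applying Definition~\ref{robust_following} with $\mathcal{F}=\mathcal{A}$), and your observation that $f+1$ node-independent paths in the normal network cannot all be covered by the size-$f$ MMC of a trimmed set is the right mechanism. However, one step does not go through as written.

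The gap is your claim that all (up to $|\mathcal{N}|$) exits from the extremal set occur \emph{within a single interval} $[k_t,k_{t+1})$, yielding the per-interval contraction $V[k_{t+1}]\leq(1-\gamma)V[k_t]$. Definition~\ref{robust_following} only guarantees that, in each interval and for each \emph{fixed} subset $\mathcal{S}$, some node of $\mathcal{S}$ has $f+1$ independent incoming paths at some unspecified time $K_i$ in that interval. If you re-apply the property to the reduced subset after one node exits, the new witness time may \emph{precede} the first node's exit time, in which case the ``outside'' values supplied by the not-yet-exited node are still extremal and the peeling stalls. In the worst case the peeling therefore advances by only one node per interval, and the contraction one can actually establish is $V[k_1+(|\mathcal{W}^\mathcal{N}|+1)K]\leq(1-\alpha^{(|\mathcal{W}^\mathcal{N}|+1)K})V[k_1]$ over $|\mathcal{W}^\mathcal{N}|+1$ consecutive intervals of maximal length $K$, which is how the paper closes the argument (convergence still follows, but your stated rate and its justification are wrong). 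Relatedly, the persistence bookkeeping you flag as ``the main obstacle'' is not optional: the tolerance index must be tied to the \emph{time step} (the paper uses $\overline{\epsilon}_{\gamma}=\alpha^{\gamma}\overline{\epsilon}_0$ with $\gamma=k-k_1$) so that a node that has left the extremal set provably remains outside it for the rest of the horizon, since at every subsequent step its update is only guaranteed to place weight $\alpha$ on non-extremal values. The case split on whether $r$ equals $\overline{x}[k_t]$ or $\underline{x}[k_t]$ is unnecessary once both margins $\overline{\epsilon}_0$ and $\underline{\epsilon}_0$ are tracked, since $\overline{\epsilon}_0+\underline{\epsilon}_0=V[k_1]$ handles all cases uniformly. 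With these repairs the proposal becomes the paper's proof; as stated, the geometric decay of $V[k]$ is not yet established.
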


\begin{proof}
	\textit{(Necessity)} If $\mathcal{G}[k]$ is not a jointly $(f+1)$-robust following graph with $l$ hops, then by Definition~\ref{robust_following}, 
	there exists an $f$-local set $\mathcal{F}$ such that $\mathcal{G}_{\mathcal{H}}[k]$ does not satisfy the condition there. Suppose that $\mathcal{F}$ is exactly the set of Byzantine agents $\mathcal{A}$. Then, in the normal network $\mathcal{G}_{\mathcal{N}}[k] = (\mathcal{N},\mathcal{E}_{\mathcal{N}}[k] )$, there must be a nonempty subset $\mathcal{S}\subseteq \mathcal{N}\setminus \mathcal{L}$ such that $\mathcal{Z}_{\mathcal{S}}^{f+1}[\hat{k}, \infty) = \emptyset $ for some $\hat{k} > k_1$. It further means that
	\begin{align}\label{atmostf}
		|\mathcal{I}_{i, \mathcal{S}}[k]| \leq f , \medspace \forall k \in [\hat{k}, \infty), \medspace \forall i\in \mathcal{S}.
	\end{align}

	Suppose that $x_i[\hat{k}]=a , \medspace \forall i\in \mathcal{S}$, and $x_j[\hat{k}]=r[k_1], \medspace \forall j\in \mathcal{N}\setminus \mathcal{S}$, where $a< r[k_1]$ is a constant. Moreover, suppose that all Byzantine nodes send $a$ and $r[k_1]$ to the nodes in $\mathcal{S}$ and $ \mathcal{N}\setminus \mathcal{S}$, respectively. 
	For normal node $i \in \mathcal{S}$, \eqref{atmostf} indicates that the cardinality of the MMC of the values larger than its own value (i.e., values from the normal nodes outside of $\mathcal{S}$) is at most $f$. These values are disregarded by the MW-MSR algorithm. Moreover, since the Byzantine nodes send $a$ to node $i$, it will use these values. Thus, node $i$ will keep a constant value $a$ after time $\hat{k}$ and resilient leader-follower consensus cannot be achieved.
	
	\textit{(Sufficiency)} 
	Let $\underline{\epsilon}_0=r[k_1] - \underline{x}[k_1]$ and $\overline{\epsilon}_0=\overline{x}[k_1] - r[k_1] $.
	Recall that $\alpha\in (0,1)$ is the lower bound for the coefficients in \eqref{msrupdate}. Denote by $K$ the maximum length of time intervals $\{[k_t,k_{t+1})\}_{t\in\mathbb{Z}_{\geq 0}}$.
	For $\gamma= 0, 1,2,\dots,(|\mathcal{W}^\mathcal{N}|+1)K$, define $\overline{\epsilon}_{\gamma}$ and $\underline{\epsilon}_{\gamma}$ as
	\begin{align*}
		\overline{\epsilon}_{\gamma} = \alpha^\gamma \overline{\epsilon}_0 , \medspace\medspace
		\underline{\epsilon}_{\gamma} = \alpha^\gamma \underline{\epsilon}_0 .
	\end{align*}
	So we have $0 < \overline{\epsilon}_{\gamma+1} < \overline{\epsilon}_{\gamma}\leq \overline{\epsilon}_0$ and $0 < \underline{\epsilon}_{\gamma+1} < \underline{\epsilon}_{\gamma}\leq \underline{\epsilon}_0$, for all $\gamma$.

	For any time $k \geq k_1$ and any $\gamma$, we define the following sets:
	\begin{align*}
		\mathcal{Z}_1(k,\overline{\epsilon}_{\gamma})&=\{i\in \mathcal{W}^\mathcal{N}  :  x_i[k]>\overline{x}[k_1] - \overline{\epsilon}_{\gamma} \},\\[1mm]
		\mathcal{Z}_2(k,\underline{\epsilon}_{\gamma})&=\{i\in \mathcal{W}^\mathcal{N} : x_i[k]<\underline{x}[k_1] + \underline{\epsilon}_{\gamma} \},\\[1mm]
		\mathcal{U}(k,\overline{\epsilon}_{\gamma},\underline{\epsilon}_{\gamma})&=\mathcal{Z}_1(k,\overline{\epsilon}_{\gamma}) \cup \mathcal{Z}_2(k,\underline{\epsilon}_{\gamma}), \\[1mm]
		\overline{\mathcal{U}}(k,\overline{\epsilon}_{\gamma},\underline{\epsilon}_{\gamma})&=\mathcal{W}^\mathcal{N}  \setminus \mathcal{U}(k,\overline{\epsilon}_{\gamma},\underline{\epsilon}_{\gamma}).
%		\mathcal{X}(k,\overline{\epsilon}_{\gamma},\underline{\epsilon}_{\gamma})&=\{i\in \mathcal{U}(k,\overline{\epsilon}_{\gamma},\underline{\epsilon}_{\gamma})  : \\[1mm]
%		&\hspace*{0.5cm}\mbox{} | \mathcal{I}_{i\in \mathcal{U}(k,\overline{\epsilon}_{\gamma},\underline{\epsilon}_{\gamma})}[k]| \geq f+1  \}.
	\end{align*}

	We will show in three steps that $|\mathcal{U}(k,\overline{\epsilon}_{\gamma},\underline{\epsilon}_{\gamma})|$ decreases over an appropriate sequence of $\gamma$. Once this set is empty after a period, $V[k]$ will decrease. We will further prove the consensus result when $k\to \infty$.

	\vspace{2mm}
	\noindent \textbf{Step 1:}
	
	Since $\mathcal{G}[k]$ is a jointly $(f+1)$-robust following graph with $l$ hops and the adversarial set $\mathcal{A}$ is an $f$-local set, for the normal network $\mathcal{G}_{\mathcal{N}}[k] = (\mathcal{N},\mathcal{E}_{\mathcal{N}}[k] )$, there exists an ISUBTI $\{[k_t,k_{t+1})\}_{t\in\mathbb{Z}_{\geq 0}}$ such that in each time interval, for every nonempty subset $\mathcal{S}\subseteq \mathcal{N}\setminus \mathcal{L}$, the following condition holds:
	\begin{align*}
		& | \mathcal{Z}_{\mathcal{S}}^{f+1}[k_t, k_{t+1}) | \\
		&=| \{i\in \mathcal{S} :  \exists K_i \in [k_t, k_{t+1}) \medspace \textup{s.t.} \medspace 
		|\mathcal{I}_{i, \mathcal{S}}[K_i]| \geq f+1 \} | \geq 1  .
	\end{align*}
	Hence, we can always find a nonempty subset $\mathcal{W}_1 \subseteq \mathcal{W}^\mathcal{N} = \mathcal{N}\setminus \mathcal{L}$ such that $\forall i_1 \in \mathcal{W}_1$, there is $K_{i_1} \in [k_1, k_2)$ such that $|\mathcal{I}_{i_1, \mathcal{W}^\mathcal{N}}[K_{i_1}]| \geq f+1$ for the first time in $[k_1, k_2)$. Therefore, any node $ i_1 \in \mathcal{W}_1$ will use at least one value from normal leaders in $\mathcal{L}^\mathcal{N}$ after applying the MW-MSR algorithm. This can be seen from step 2 of Algorithm~1, where node $i_1$ can only remove the values from at most $f$ nodes sharing the same value. Notice that in \eqref{msrupdate}, each $a_i[k]$ is lower bounded by $\alpha$. Moreover, from Lemma~\ref{lemma_safety}, $x_i [k] \in [\underline{x}[k_1], \overline{x}[k_1]], \forall i\in \mathcal{N}, \forall k\in [k_1, \infty)$, 
	Hence, we obtain the following bound:
	\begin{align}\label{lower}
		x_{i_1}[K_{i_1}+1] &\geq (1-\alpha)\underline{x}[k_1]+\alpha r[k_1] \nonumber\\
		&\geq \underline{x}[k_1]+\alpha \underline{\epsilon}_0.
	\end{align}
	Extending these bounds to time $k_2$ yields
	\begin{align*}
		x_{i_1}[K_{i_1}+2] &\geq (1-\alpha)\underline{x}[k_1]+\alpha x_{i_1}[k_1+1] \\
		&\geq (1-\alpha)\underline{x}[k_1]+\alpha (\underline{x}[k_1]+\alpha \underline{\epsilon}_0) \\
		&\geq \underline{x}[k_1]+\alpha^2 \underline{\epsilon}_0, \\
		x_{i_1}[K_{i_1}+3] &\geq (1-\alpha)\underline{x}[k_1]+\alpha (\underline{x}[k_1]+\alpha^2 \underline{\epsilon}_0) \\
		&\geq \underline{x}[k_1]+\alpha^3 \underline{\epsilon}_0, \\
		&  \hspace*{0.19cm}\mbox{} \vdots   \\
		x_{i_1}[k_2] &\geq (1-\alpha)\underline{x}[k_1]+\alpha (\underline{x}[k_1]+\alpha^{k_2-K_{i_1}-1} \underline{\epsilon}_0) \\
		&\geq \underline{x}[k_1]+\alpha^{k_2-K_{i_1}} \underline{\epsilon}_0 \\
		&\geq \underline{x}[k_1]+\alpha^{k_2-k_1} \underline{\epsilon}_0.
	\end{align*}
	Using similar arguments, we can establish the upper bounds for node $i_1$ as
	\begin{align*}
		x_{i_1}[K_{i_1}+1] &\leq (1-\alpha)\overline{x}[k_1]+\alpha r[k_1], \\
		&  \hspace*{0.19cm}\mbox{} \vdots   \\
		x_{i_1}[k_2] &\leq (1-\alpha)\overline{x}[k_1]+\alpha (\overline{x}[k_1]+\alpha^{k_2-K_{i_1}-1} \overline{\epsilon}_0) \\
		&\leq \overline{x}[k_1]-\alpha^{k_2-k_1} \overline{\epsilon}_0.
	\end{align*}
	Hence, we obtain $x_{i_1}[k_2]\in [\underline{x}[k_1]+\alpha^{k_2-k_1} \underline{\epsilon}_0, \overline{x}[k_1]-\alpha^{k_2-k_1} \overline{\epsilon}_0] = \overline{\mathcal{U}}(k_2,\overline{\epsilon}_{k_2-k_1},\underline{\epsilon}_{k_2-k_1}), \forall i_1 \in \mathcal{W}_1$.
	
	\vspace{2mm}
	\noindent \textbf{Step 2:}
	
	We show next that $| \mathcal{U}(k_3,\overline{\epsilon}_{k_3-k_1},\underline{\epsilon}_{k_3-k_1}) |< |\mathcal{W}^\mathcal{N}|$.
	Observe that $\mathcal{W}_1 \subseteq \overline{\mathcal{U}}(k_2,\overline{\epsilon}_{k_2-k_1},\underline{\epsilon}_{k_2-k_1})$, and hence, $\overline{\mathcal{U}}(k_2,\overline{\epsilon}_{k_2-k_1},\underline{\epsilon}_{k_2-k_1})$ is nonempty. Since each node in $\mathcal{W}^\mathcal{N}$ always uses its own state in \eqref{msrupdate}, lower bounds on the values of any node $i_2 \in \overline{\mathcal{U}}(k_2,\overline{\epsilon}_{k_2-k_1},\underline{\epsilon}_{k_2-k_1})$ can be established as
	\begin{align*}
		x_{i_2}[k_2+1] &\geq (1-\alpha)\underline{x}[k_1]+\alpha x_{i_1}[k_2] \\
		&\geq (1-\alpha)\underline{x}[k_1]+\alpha (\underline{x}[k_1]+\alpha^{k_2-k_1} \underline{\epsilon}_0) \\
		&\geq \underline{x}[k_1]+\alpha^{k_2+1-k_1} \underline{\epsilon}_0, \\
		x_{i_2}[k_2+2] &\geq \underline{x}[k_1]+\alpha^{k_2+2-k_1} \underline{\epsilon}_0, \\
		&  \hspace*{0.19cm}\mbox{} \vdots   \\
		x_{i_2}[k_3] &\geq \underline{x}[k_1]+\alpha^{k_3-k_1} \underline{\epsilon}_0.
	\end{align*}
	Similarly, the following upper bounds hold for node $i_2$:
	\begin{align*}
		x_{i_2}[k_2+1] &\leq (1-\alpha)\overline{x}[k_1]+\alpha x_{i_1}[k_2] \\
		&\leq (1-\alpha)\overline{x}[k_1]+\alpha (\overline{x}[k_1]-\alpha^{k_2-k_1} \overline{\epsilon}_0) \\
		&\leq \overline{x}[k_1]-\alpha^{k_2+1-k_1} \overline{\epsilon}_0, \\
		x_{i_2}[k_2+2] &\leq \overline{x}[k_1]-\alpha^{k_2+2-k_1} \overline{\epsilon}_0, \\
		&  \hspace*{0.19cm}\mbox{} \vdots   \\
		x_{i_2}[k_3] &\leq \overline{x}[k_1]-\alpha^{k_3-k_1} \overline{\epsilon}_0.
	\end{align*}
	Hence, $\overline{\mathcal{U}}(k_2,\overline{\epsilon}_{k_2-k_1},\underline{\epsilon}_{k_2-k_1}) \cap \mathcal{U}(k_3,\overline{\epsilon}_{k_3-k_1},\underline{\epsilon}_{k_3-k_1}) = \emptyset$, indicating that $| \mathcal{U}(k_3,\overline{\epsilon}_{k_3-k_1},\underline{\epsilon}_{k_3-k_1}) |< |\mathcal{W}^\mathcal{N}|$.

	\vspace{2mm}
	\noindent \textbf{Step 3:}
	
	In the rest of the proof, we will show that $\forall t\geq 3$, it holds
	\begin{align}\label{shrinking}
		| \mathcal{U}(k_t,&\overline{\epsilon}_{k_t-k_1},\underline{\epsilon}_{k_t-k_1}) | \nonumber   \\ &>  | \mathcal{U}(k_{t+1}, \overline{\epsilon}_{k_{t+1}-k_1},\underline{\epsilon}_{k_{t+1}-k_1}) |,
	\end{align}
	until $ \mathcal{U}(k_{t},\overline{\epsilon}_{k_{t}-k_1},\underline{\epsilon}_{k_{t}-k_1})$ becomes empty.

	Since $\mathcal{G}[k]$ is a jointly $(f+1)$-robust following graph with $l$ hops and the adversarial set $\mathcal{A}$ is an $f$-local set, in every time interval $[k_t, k_{t+1})$, there exists a nonempty subset $\mathcal{X}_t \subseteq \mathcal{U}(k_t,\overline{\epsilon}_{k_t-k_1},\underline{\epsilon}_{k_t-k_1})$
	so that $\forall i_3 \in \mathcal{X}_t$, it holds that $\exists K_{i_3} \in [k_t, k_{t+1})  \medspace \textup{s.t.} \medspace 
	|\mathcal{I}_{i_3, \mathcal{U}(k_t,\overline{\epsilon}_{k_t-k_1},\underline{\epsilon}_{k_t-k_1})}[K_{i_3}]| \geq f+1$ for the first time in $[k_t, k_{t+1})$.
	Observe that $\forall i_3 \in \mathcal{X}_t$, either one of the following holds:
	\begin{align*}
		x_{i_3}[k_t] &>\overline{x}[k_1] - \overline{\epsilon}_{k_t-k_1},   \\ 
		x_{i_3}[k_t] &<\underline{x}[k_1] + \underline{\epsilon}_{k_t-k_1}  .
	\end{align*}
	Therefore, any node $ i_3 \in \mathcal{X}_t$ will use at least one normal neighbor’s value from the interval $[\underline{x}[k_1]+\alpha^{k_t-k_1} \underline{\epsilon}_0, \overline{x}[k_1]-\alpha^{k_t-k_1} \overline{\epsilon}_0]$ in \eqref{msrupdate} when it applies the MW-MSR algorithm. This can be seen from the facts that node $i_3$ has at least $f+1$ normal neighbors within $l$ hops outside the set $\mathcal{U}(k_t,\overline{\epsilon}_{k_t-k_1},\underline{\epsilon}_{k_t-k_1})$ and node $i_3$ can only remove the values from at most $f$ nodes having values smaller or larger than its own value.
	Hence, we obtain the following bounds $\forall i_3 \in \mathcal{X}_t$:
	\begin{align}\label{lower_pk}
		x_{i_3}[K_{i_3}+1] 
		&\geq (1-\alpha)\underline{x}[k_1]+\alpha (\underline{x}[k_1]+\alpha^{k_t-k_1} \underline{\epsilon}_0) \nonumber \\
		&\geq \underline{x}[k_1]+\alpha^{1+k_t-k_1} \underline{\epsilon}_0, \nonumber \\
		x_{i_3}[K_{i_3}+2] &\geq \underline{x}[k_1]+\alpha^{2+k_t-k_1} \underline{\epsilon}_0, \nonumber \\
		&  \hspace*{0.19cm}\mbox{} \vdots  \nonumber \\
		x_{i_3}[k_{t+1}] &\geq \underline{x}[k_1]+\alpha^{k_{t+1}-K_{i_3}+k_t-k_1} \underline{\epsilon}_0 \nonumber\\
		&\geq \underline{x}[k_1]+\alpha^{k_{t+1}-k_1} \underline{\epsilon}_0.
	\end{align}
	Similarly, the following upper bounds can be established $\forall i_3 \in \mathcal{X}_t$:
	\begin{align}\label{upper_pk}
		x_{i_3}[K_{i_3}+1] 
		&\leq (1-\alpha)\overline{x}[k_1]+\alpha (\overline{x}[k_1]-\alpha^{k_t-k_1} \overline{\epsilon}_0) \nonumber \\
		&\leq \overline{x}[k_1]-\alpha^{1+k_t-k_1} \overline{\epsilon}_0, \nonumber \\
		x_{i_3}[K_{i_3}+2] &\leq \overline{x}[k_1]-\alpha^{2+k_t-k_1} \overline{\epsilon}_0, \nonumber \\
		&  \hspace*{0.19cm}\mbox{} \vdots    \nonumber \\
		x_{i_3}[k_{t+1}] &\leq \overline{x}[k_1]-\alpha^{k_{t+1}-k_1} \overline{\epsilon}_0.
	\end{align}
	This implies that
	\begin{align}\label{x_notin}
		\mathcal{X}_t \cap \mathcal{U}(k_{t+1}, \overline{\epsilon}_{k_{t+1}-k_1},\underline{\epsilon}_{k_{t+1}-k_1}) = \emptyset.
	\end{align}
	Note that the bounds in \eqref{lower_pk} and \eqref{upper_pk} also apply to all nodes $j_3\in \overline{\mathcal{U}}(k_t,\overline{\epsilon}_{k_t-k_1},\underline{\epsilon}_{k_t-k_1})$, since they have values located in $[\underline{x}[k_1]+\alpha^{k_t-k_1} \underline{\epsilon}_0, \overline{x}[k_1]-\alpha^{k_t-k_1} \overline{\epsilon}_0]$ by definition, and each $j_3$ does not filter out its own value. Therefore, we have
	\begin{align}\label{ubar_notin}
		\overline{\mathcal{U}}(k_t,\overline{\epsilon}_{k_t-k_1},\underline{\epsilon}_{k_t-k_1})  \cap 
		\mathcal{U}(k_{t+1}, \overline{\epsilon}_{k_{t+1}-k_1},\underline{\epsilon}_{k_{t+1}-k_1}) = \emptyset.
	\end{align}
	Hence, \eqref{x_notin} and \eqref{ubar_notin} together have proved that \eqref{shrinking} holds for $t\geq 3$.

	Since $\mathcal{W}^\mathcal{N}$ is finite, there must be a time step $k_1 + (|\mathcal{W}^\mathcal{N}|+1) K$ such that $\mathcal{U}(k_1+(|\mathcal{W}^\mathcal{N}|+1)K, \overline{\epsilon}_{(|\mathcal{W}^\mathcal{N}|+1)K},\underline{\epsilon}_{(|\mathcal{W}^\mathcal{N}|+1)K}) = \emptyset$. This implies that $\forall i\in\mathcal{W}^\mathcal{N}$, we have
	\begin{align}\label{lower_pk11}
		x_{i}[k_1+(|\mathcal{W}^\mathcal{N}|+1)K] &\geq \underline{x}[k_1]+\alpha^{(|\mathcal{W}^\mathcal{N}|+1)K} \underline{\epsilon}_0, \nonumber \\
		x_{i}[k_1+(|\mathcal{W}^\mathcal{N}|+1)K] &\leq \overline{x}[k_1]-\alpha^{(|\mathcal{W}^\mathcal{N}|+1)K} \overline{\epsilon}_0.
	\end{align}
	It further indicates that
	\begin{align}\label{lower_pk22}
		V&[k_1+(|\mathcal{W}^\mathcal{N}|+1)K]  \nonumber \\
		&=\overline{x}_{i}[k_1+(|\mathcal{W}^\mathcal{N}|+1)K] - \underline{x}_{i}[k_1+(|\mathcal{W}^\mathcal{N}|+1)K] \nonumber \\
		&\leq \overline{x}[k_1]-\alpha^{(|\mathcal{W}^\mathcal{N}|+1)K} \overline{\epsilon}_0  - (\underline{x}[k_1]+\alpha^{(|\mathcal{W}^\mathcal{N}|+1)K} \underline{\epsilon}_0) \nonumber \\
		&= V[k_1]-\alpha^{(|\mathcal{W}^\mathcal{N}|+1)K} (\overline{\epsilon}_0+\underline{\epsilon}_0).
	\end{align}
	Recall that $\underline{\epsilon}_0=r[k_1] - \underline{x}[k_1]$ and $\overline{\epsilon}_0=\overline{x}[k_1] - r[k_1] $, and hence, $\overline{\epsilon}_0+\underline{\epsilon}_0= V[k_1]$. Then we have
	\begin{align}\label{lower_pk33}
		V[k_1+(|\mathcal{W}^\mathcal{N}|+1)K] &\leq V[k_1]-\alpha^{(|\mathcal{W}^\mathcal{N}|+1)K} V[k_1] \nonumber \\
		&= (1-\alpha^{(|\mathcal{W}^\mathcal{N}|+1)K}) V[k_1].
	\end{align}
	We can repeat the above analysis and obtain for $\delta \in \mathbb{Z}_{> 1}$,
	\begin{align}\label{lower_pk44}
		V&[k_1+(|\mathcal{W}^\mathcal{N}|+1)\delta K]  \nonumber \\
		&\leq (1-\alpha^{(|\mathcal{W}^\mathcal{N}|+1)K}) V[k_1+(|\mathcal{W}^\mathcal{N}|+1)(\delta-1) K] \nonumber \\
		&\leq (1-\alpha^{(|\mathcal{W}^\mathcal{N}|+1)K})^\delta  V[k_1] .
	\end{align}
	%where $k_1+(|\mathcal{W}^\mathcal{N}|+1)\delta K < k_e$.
	When time tends to infinity, by Lemma~\ref{lemma_safety}, we further have 
	\begin{align}\label{lower_pk55}
		\lim_{k\to \infty} V[k] &\leq \lim_{\delta \to \infty} V[k_1+(|\mathcal{W}^\mathcal{N}|+1)\delta K]  \nonumber \\
		&\leq \lim_{\delta \to \infty} (1-\alpha^{(|\mathcal{W}^\mathcal{N}|+1)K})^\delta  V[k_1] = 0,
	\end{align}
	where the last equality holds since $0<1-\alpha^{(|\mathcal{W}^\mathcal{N}|+1)K} < 1$. This clearly indicates that all normal nodes reach consensus as in \eqref{reach_consensus}. The proof is complete.
\end{proof}

\begin{remark}
		For the special case when there is no adversarial node in the network $\mathcal{G}[k]$ (i.e., $f=0$), our result in Theorem~\ref{theorem_firstorder} implies that followers using the MW-MSR algorithm can also achieve leader-follower consensus if and only if $\mathcal{G}[k]$ is a jointly $1$-robust following graph with $l$ hops. We note that for the fault-free case, this graph condition is equivalent to the one in \cite{ren2008consensus}, which is that $\mathcal{G}[k]$ jointly has a spanning tree rooted at the leader.\footnote{If there are multiple leaders in $\mathcal{G}[k]$, they can be viewed as one node.}  
\end{remark}

\begin{remark}
	We emphasize that our graph condition is tight for the MW-MSR algorithm applied in time-varying leader-follower networks. In fact, even for the one-hop case, it is tighter than the one in \cite{usevitch2020resilient} in terms of static networks as indicated in Lemma~\ref{tighter1}. Specifically, the authors of \cite{usevitch2020resilient} have looked into the resilient leader-follower consensus with a static reference value using the SW-MSR algorithm. They have accordingly proposed a sufficient graph condition for their algorithm to succeed, which is that $\mathcal{G}[k]$ is strongly $(Q, t_0, 2f+1)$-robust w.r.t. the set $\mathcal{L}$. It requires the union of $\mathcal{G}[k]$ across a time interval of length $Q$ to satisfy that any nonempty node subset $\mathcal{C} \subseteq \mathcal{V}\setminus \mathcal{L}$ contains at least one node having $2f+1$ in-neighbors outside $\mathcal{C}$.
\end{remark}

We can also apply the multi-hop relay technique in the SW-MSR algorithm, and our sufficient condition for the new algorithm can also be relaxed compared to the one in \cite{usevitch2020resilient}. The proof of Lemma~\ref{tighter1} is given in the Appendix A.

\begin{lemma}\label{tighter1}
	If graph $\mathcal{G}[k]$ is strongly $(Q, t_0, 2f+1)$-robust w.r.t. the set $\mathcal{L}$ where $Q=1$, then $\mathcal{G}[k]$ is a jointly $(f+1)$-robust following graph with $1$ hop where $k_{t+1}-k_t=1,  \forall [k_t, k_{t+1})$, and the converse does not hold.
\end{lemma}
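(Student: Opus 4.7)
The plan is to prove the forward implication by a direct degree-counting argument and to establish the converse fails by exhibiting a small static counterexample with three leaders and three followers.

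For the forward direction, I would fix an arbitrary time $k$ and an arbitrary $f$-local set $\mathcal{F}$, and set $\mathcal{H}=\mathcal{V}\setminus\mathcal{F}$. Taking any nonempty $\mathcal{S}\subseteq\mathcal{H}\setminus\mathcal{L}$, I view $\mathcal{S}$ as a subset of $\mathcal{V}\setminus\mathcal{L}$ and invoke strong $(1,t_0,2f+1)$-robustness (which, for $Q=1$, holds at every step) to produce some $i\in\mathcal{S}$ with at least $2f+1$ in-neighbors in $\mathcal{V}\setminus\mathcal{S}$. Writing $\mathcal{V}\setminus\mathcal{S}=\mathcal{F}\sqcup(\mathcal{H}\setminus\mathcal{S})$ and using $f$-locality to bound $|\mathcal{N}_i^{1-}[k]\cap\mathcal{F}|\leq f$, at least $f+1$ of these in-neighbors must lie in $\mathcal{H}\setminus\mathcal{S}$. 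For one-hop communication, each such in-neighbor supplies a distinct independent path to $i$, so $|\mathcal{I}_{i,\mathcal{S}}[k]|\geq f+1$ and $\mathcal{Z}_{\mathcal{S}}^{f+1}[k,k+1)\neq\emptyset$. Since this holds for every $k$, every $f$-local $\mathcal{F}$, and every nonempty $\mathcal{S}$, $\mathcal{G}[k]$ is jointly $(f+1)$-robust following with $1$ hop for $k_{t+1}-k_t=1$.

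For the converse, I would construct a static graph with $f=1$, $\mathcal{L}=\{L_1,L_2,L_3\}$, $\mathcal{W}=\{F_1,F_2,F_3\}$, and in-neighbor sets $\mathcal{N}_{F_1}^{1-}\setminus\{F_1\}=\{L_1,L_2,F_2\}$, $\mathcal{N}_{F_2}^{1-}\setminus\{F_2\}=\{L_2,L_3,F_3\}$, $\mathcal{N}_{F_3}^{1-}\setminus\{F_3\}=\{L_1,L_3,F_1\}$. Strong $(1,t_0,3)$-robustness fails immediately at $\mathcal{C}=\{F_1,F_2,F_3\}$, since every follower has only two in-neighbors outside $\mathcal{C}$. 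To verify that the jointly $2$-robust following property with intervals of length one holds, I would dispatch the cases $|\mathcal{S}|\in\{1,2\}$ by noting that every follower has total in-degree three and the $1$-local bound $|(\mathcal{N}_{F_i}^{1-}\setminus\{F_i\})\cap\mathcal{F}|\leq 1$ leaves at least two in-neighbors in $\mathcal{H}\setminus\mathcal{S}$. The triple case $\mathcal{S}=\{F_1,F_2,F_3\}$ forces $\mathcal{F}\subseteq\mathcal{L}$; since each of the three pairs $\{L_1,L_2\},\{L_2,L_3\},\{L_1,L_3\}$ sits inside some follower's in-neighbor set, $1$-locality excludes every two-leader $\mathcal{F}$, leaving only $\mathcal{F}\in\{\emptyset,\{L_1\},\{L_2\},\{L_3\}\}$; in each case a follower whose pair is untouched retains two leader in-neighbors in $\mathcal{H}\setminus\mathcal{S}$.

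The main obstacle is the triple case in the counterexample. The argument rests on the combinatorial fact that the three two-element leader pairs seen by $F_1,F_2,F_3$ form a triangle hypergraph on $\{L_1,L_2,L_3\}$ whose every two-element transversal is simultaneously an in-neighbor pair of some follower, and hence violates $1$-locality. This is precisely where the gap between the two graph properties materializes: the strong condition must survive unrestricted removals up to size $2f+1$, whereas the jointly condition need only survive $f$-locally restricted removals, and the restriction can be strict enough to block the worst removals even when only $f+1$ rather than $2f+1$ external in-neighbors are available per vertex.
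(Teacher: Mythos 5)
Your forward direction is the same argument as the paper's: pick the node $i\in\mathcal{S}$ guaranteed by strong $(1,t_0,2f+1)$-robustness, observe that $f$-locality of $\mathcal{F}$ removes at most $f$ of its $2f+1$ external in-neighbors, and conclude $|\mathcal{I}_{i,\mathcal{S}}[k]|\geq f+1$ in $\mathcal{G}_{\mathcal{H}}$. For the converse the paper does not build a fresh example; it reuses the $9$-node time-varying graph of its Fig.~2 (leaders $\{7,8,9\}$) and notes that the follower set $\{1,2,3\}$ has no node with $3$ in-neighbors outside it, whereas you construct a self-contained $6$-node static counterexample (three leaders, three followers in a directed cycle, each follower seeing a distinct leader pair). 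Your example is valid and arguably cleaner, since it makes explicit exactly where the two notions diverge: strong robustness must survive arbitrary removals of up to $2f$ external in-neighbors, while the robust-following condition only needs to survive $f$-locally constrained removals, and in your graph every two-leader removal already violates $1$-locality. One small imprecision: in the $|\mathcal{S}|=2$ case it is not true that \emph{every} follower in $\mathcal{S}$ retains two in-neighbors in $\mathcal{H}\setminus\mathcal{S}$ (e.g., for $\mathcal{S}=\{F_1,F_2\}$ and $\mathcal{F}=\{L_1\}$, node $F_1$ retains only $L_2$); the condition is existential, and it is the follower whose cycle-predecessor lies outside $\mathcal{S}$ (here $F_2$, with all three in-neighbors outside $\mathcal{S}$) that certifies $|\mathcal{Z}_{\mathcal{S}}^{2}|\geq 1$. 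With that wording tightened, the proof is complete.
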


We note that a numerical example is provided in Section~\ref{sec_example1} of the graph described in the proof. We apply the one-hop MW-MSR algorithm in this network, which is equivalent to the SW-MSR algorithm in \cite{usevitch2020resilient} when $Q=1$. The results show that resilient leader-follower consensus is achieved and this claim is verified from the numerical viewpoint.

\begin{remark}
Our result generalizes the one for the one-hop W-MSR algorithm applied in time-varying leader-follower networks. Besides, our graph condition is more relaxed for the case of $l\geq 2$ than that of the one-hop case since joint graph robustness generally increases as the relay range $l$ increases. We will illustrate this point through numerical examples in Section~\ref{sec_example}. 
Moreover, in this paper, we study the Byzantine model, which is more adversarial than the malicious model studied for leaderless resilient consensus in \cite{leblanc2013resilient,saldana2017resilient,yuan2021resilient}.
Similar to the works studying the Byzantine model \cite{yuan2022asynchronous,su2017reaching,vaidya2012iterative}, our graph condition guarantees that the normal network is sufficiently robust to fight against Byzantine agents for tracking the correct reference value.
\end{remark}

\begin{remark}\label{remark_leaderless}
		We compare the analysis in Theorem~\ref{theorem_firstorder} with the leaderless resilient consensus results. For convenience, we consider static networks in this comparison, and Corollary~\ref{corollary_static} in Section~\ref{sec_second} will formally state our results for such a case.
		In the works \cite{leblanc2013resilient,yuan2021resilient} studying leaderless resilient consensus under the malicious model, the key graph condition of $r$-robustness with $l$ hops requires that for any two nonempty disjoint node sets $\mathcal{V}_1, \mathcal{V}_2 \subset \mathcal{V}$, at least one set must contain a node having $r$ independent paths originating from nodes outside the set. For the Byzantine model, the condition of strict robustness requires a similar structure \cite{yuan2022asynchronous}. In contrast, the condition of robust following graphs is on a single arbitrary set $\mathcal{S} \subseteq \mathcal{H}\setminus \mathcal{L}$ in Definition~\ref{robust_following}. This difference is due to the nature of the two problems. In particular, leaderless resilient consensus requires normal nodes to reach consensus on a value which is not determined a priori. Thus, the normal nodes in $\mathcal{V}_1$ (or $\mathcal{V}_2$) either are influenced by or influence the ones outside the set. Therefore, the robustness notion is defined on two node sets to characterize such two possible information flows. This also indicates that some normal nodes in $\mathcal{V}_1$ (or $\mathcal{V}_2$) may not receive enough influences from the outside of the set and such nodes can act as ``virtual leaders'' in the network and resilient consensus can still be achieved.
		On the other hand, resilient leader-follower consensus requires normal followers to reach consensus on the leaders' value. Hence, the followers in every set $\mathcal{S}$ must be influenced by the normal nodes outside the set through enough incoming paths.
\end{remark}

Stemming from this important difference, we can conclude that also for the malicious model, the graph condition in Theorem~\ref{theorem_firstorder} is necessary and sufficient for the MW-MSR algorithm to guarantee resilient leader-follower consensus. This is because for the leader-follower case, the types of adversaries will not change the graph structure for normal nodes in set $\mathcal{S}$ to be influenced by the normal ones outside $\mathcal{S}$.

%\begin{remark}
%It is worth noting that leader-follower consensus requires each follower node to track the reference value. Hence, each follower $i\in \mathcal{S}$ must have enough incoming paths (or edges) from the outside of the node set $\mathcal{S}$ it belongs to. This is very different from leaderless resilient consensus under the malicious model, where some normal nodes in the network may not receive enough influences from the outside of the node set they belong to and such nodes can act as ``virtual leaders'' in the network and further achieve resilient consensus. 
%\textcolor{blue}{ 
%Therefore, we conclude that the graph condition in Theorem~\ref{theorem_firstorder} is also necessary and sufficient for the MW-MSR algorithm to guarantee resilient leader-follower consensus under the malicious model.}
%\end{remark}

\subsection{Analysis on Secure Leader Agents and Comparisons with Related Works}

%So far, our discussions have focused on the cases where the leader agents are not secure, i.e., it is possible that $|\mathcal{L}^\mathcal{N}| < |\mathcal{L}|$. There are literature studying cases with secure leader agent(s) (i.e., $\mathcal{L}^\mathcal{N} = \mathcal{L}$), where such a leader is guaranteed to be normal through additional security measures so that it is not vulnerable to failures or attacks. For example, the recent work \cite{rezaee2021resiliency} studied resilient dynamic leader-follower consensus where the secure leader agent broadcasts a time-varying reference value and the authors provided a sufficient graph condition for their algorithm. Moreover, the authors of \cite{tseng2015broadcast} proved a necessary and sufficient graph condition for the CPA algorithm to broadcast a static reference value of the secure leader to followers in the network.

In this subsection, we turn our attention to the secure leader case with $\mathcal{L}^\mathcal{N} = \mathcal{L}$. In the literature, this case has been studied. Agents can be secure through additional security measures so that they are not vulnerable to failures or attacks. 
For example, the work \cite{rezaee2021resiliency} studied resilient dynamic leader-follower consensus where the secure leader agent broadcasts a time-varying reference value and provided a sufficient graph condition for followers to track the leader with bounded errors. Moreover, the authors of \cite{tseng2015broadcast} proved a necessary and sufficient graph condition for the CPA algorithm where a static reference value of the secure leader is broadcast to followers in the network.

%We formulate our problem of resilient leader-follower consensus with secure leader agents (i.e., Case 1 in Table~I) as follows. For Case 2 there, since secure leaders are not known by the followers, the followers have to apply the MW-MSR algorithm as well in order to avoid being misled by the adversarial neighbors. Thus, the approach to handle Case 2 is the same as the one for Cases 3 and 4 in Table~I.

We formulate our resilient leader-follower consensus with secure leader agents as Problem~\ref{problem2} below. Here, we study the special case when such leaders are known to the followers. This is indeed the simplest problem setting to handle among the four cases listed in Table~I. The other case is when the secure leaders are unknown. 
We however note that this case has to be treated as in the cases of insecure leaders since the followers have to apply the MW-MSR algorithm as well in order to avoid being misled by the adversarial neighbors. Thus, the approach to handle Case~2 is the same as the one for Cases~3 and 4 in Table~I.

\begin{problem}\label{problem2}
	Suppose that all leader agents in $\mathcal{L}$ are secure and their identities are known to the direct followers, which are out-neighbors of any leader agent.
	We say that the normal agents in $\mathcal{N}$ reach resilient leader-follower consensus with secure leader agents if for any possible sets and behaviors of the adversaries in $\mathcal{A}$ and any state values of the normal agents in $\mathcal{N}$, \eqref{reach_consensus} is satisfied.
\end{problem}

Denote by $\mathcal{U}_{\mathcal{G}}[k_t, k_{t+1})$ the union of the graphs $\mathcal{G}[k]$ across the time interval $[k_t, k_{t+1})$. Moreover, let $\mathcal{W}_\mathcal{L} =  \{i\in \mathcal{W} :  \exists d \in \mathcal{L} \medspace \textup{s.t.} \medspace  i\in \mathcal{N}_d^{1+} \} $ in the union graph $\mathcal{U}_{\mathcal{G}}[0, \infty)$.
To solve the problem with secure leader agents, we slightly modify our resilient algorithm as follows.

\textit{The modified MW-MSR algorithm:} Each normal follower $i\in \mathcal{W}_\mathcal{L} \cap \mathcal{N}$ updates its value $x_i[k+1]=r[k]$ at each time $k$. Meanwhile, the rest of the normal followers in $\mathcal{W}^\mathcal{N}\setminus \mathcal{W}_\mathcal{L}$ still apply the MW-MSR algorithm.

From the modified algorithm and the assumption on the secure leader agents, we can see that each follower in $\mathcal{W}_\mathcal{L}$ need not have other connections except at least one incoming edge from the secure leader agents. Therefore, they can be viewed as the ``not secure leaders'' for the rest of the followers. Hence, we denote the reduced subgraph of $\mathcal{G}[k]$ by $\mathcal{G}_\mathcal{S}[k]$, where $\mathcal{S}=\mathcal{V}\setminus\mathcal{L}$. Moreover, denote the set of such virtual leaders in $\mathcal{G}_\mathcal{S}[k]$ by $\mathcal{L}'=\mathcal{W}_\mathcal{L}$.

We are ready to state a result for the modified MW-MSR algorithm to achieve resilient leader-follower consensus with secure leader agents. Since it can be proved using an analysis similar to that in the proof of Theorem~\ref{theorem_firstorder}, we omit its proof.  
Moreover, we note that the results for secure leader agents in Proposition~\ref{proposition_secure} also hold for the MDP-MSR algorithm to be introduced in Section~\ref{sec_second} studying second-order MASs.

\begin{proposition}\label{proposition_secure}
	Consider the time-varying network $\mathcal{G}[k] = (\mathcal{V},\mathcal{E}[k])$ with $l$-hop communication, where each normal follower node $i\in \mathcal{W}^\mathcal{N}$ updates its value according to the modified MW-MSR algorithm with parameter $f$. Under the $f$-local adversarial set $\mathcal{A}$ and the assumption that $r [k]$ is constant $\forall k\in [k_1, \infty)$, resilient leader-follower consensus with secure leader agents is achieved if and only if the subgraph $\mathcal{G}_\mathcal{S}[k]$ is a jointly $(f+1)$-robust following graph with $l$ hops.
\end{proposition}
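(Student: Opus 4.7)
The plan is to reduce Proposition~\ref{proposition_secure} to Theorem~\ref{theorem_firstorder} by treating the subgraph $\mathcal{G}_\mathcal{S}[k]$, together with the virtual leader set $\mathcal{L}'=\mathcal{W}_\mathcal{L}$, as an instance of the original leader--follower setting. Under the modified algorithm, every normal node $i\in\mathcal{L}'\cap\mathcal{N}$ satisfies $x_i[k+1]=r[k]$, which is exactly the update rule \eqref{leader} obeyed by normal leaders; meanwhile, the remaining normal followers in $\mathcal{W}^\mathcal{N}\setminus\mathcal{W}_\mathcal{L}$ run the standard MW-MSR algorithm with parameter $f$. Since the secure leaders in $\mathcal{L}$ already carry $r[k]$ and lie outside $\mathcal{G}_\mathcal{S}[k]$, condition \eqref{reach_consensus} holds for the original problem if and only if it holds for the reduced problem on $\mathcal{G}_\mathcal{S}[k]$ with leader set $\mathcal{L}'$.

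For sufficiency, I would assume $\mathcal{G}_\mathcal{S}[k]$ is a jointly $(f+1)$-robust following graph with $l$ hops and observe that the $f$-local property of $\mathcal{A}$ in $\mathcal{G}[k]$ automatically transfers to $\mathcal{G}_\mathcal{S}[k]$, since deleting the (secure) nodes in $\mathcal{L}$ cannot increase the number of adversarial $l$-hop neighbors of any normal node. The three-step argument of Theorem~\ref{theorem_firstorder}, which tracks the shrinking sets $\mathcal{U}(k,\overline{\epsilon}_\gamma,\underline{\epsilon}_\gamma)$ and uses the safety bound of Lemma~\ref{lemma_safety} as the outer envelope, then applies verbatim on $\mathcal{G}_\mathcal{S}[k]$: it shows that every normal follower outside $\mathcal{W}_\mathcal{L}$ converges to $r[k]$, while normal nodes in $\mathcal{W}_\mathcal{L}$ already equal $r[k]$ by construction, yielding \eqref{reach_consensus}.

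For necessity, I would mirror the adversarial construction used in the necessity part of Theorem~\ref{theorem_firstorder}. If $\mathcal{G}_\mathcal{S}[k]$ fails the jointly $(f+1)$-robust following property, there exists an $f$-local set $\mathcal{F}$, which I take as the Byzantines, and a nonempty subset $\mathcal{S}\subseteq\mathcal{N}\setminus(\mathcal{L}\cup\mathcal{L}')$ such that from some time onward every node in $\mathcal{S}$ has at most $f$ independent incoming $l$-hop paths in $\mathcal{G}_\mathcal{S}[k]$ from outside $\mathcal{S}$. Assigning value $a<r[k_1]$ to all nodes in $\mathcal{S}$ and having the Byzantines inject $a$ toward $\mathcal{S}$ and $r[k_1]$ elsewhere, the MMC-based filter in step~2 of Algorithm~1 discards every normal incoming value above $a$, so the nodes in $\mathcal{S}$ remain frozen at $a$ forever, contradicting \eqref{reach_consensus}.

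The main obstacle I anticipate is the path-accounting in the reduction: multi-hop incoming paths to a follower in $\mathcal{W}^\mathcal{N}\setminus\mathcal{W}_\mathcal{L}$ may in principle pass through original leaders in $\mathcal{L}$, but every such relayed value equals $r[k]$ and must reach that follower via some intermediate node in $\mathcal{W}_\mathcal{L}$, which already broadcasts $r[k]$ as a virtual leader. This is precisely why defining $\mathcal{L}'=\mathcal{W}_\mathcal{L}$ (the direct out-neighbors of the secure leaders) makes the robustness condition on $\mathcal{G}_\mathcal{S}[k]$ the tight equivalent of a condition one might otherwise be tempted to state on the full $\mathcal{G}[k]$, and it is the step that must be justified carefully before invoking Theorem~\ref{theorem_firstorder}.
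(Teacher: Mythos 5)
Your reduction is exactly the route the paper takes: it explicitly omits the proof of Proposition~\ref{proposition_secure}, stating that it follows by the same analysis as Theorem~\ref{theorem_firstorder} after viewing the direct followers $\mathcal{W}_\mathcal{L}$ as ``not secure'' virtual leaders $\mathcal{L}'$ in the reduced subgraph $\mathcal{G}_\mathcal{S}[k]$, which is precisely your construction. Your sufficiency and necessity arguments mirror those of Theorem~\ref{theorem_firstorder} as intended, so the proposal is correct and essentially identical in approach (your closing remark that relayed values through secure leaders ``equal $r[k]$'' is slightly imprecise---a leader relays the source's value, not $r[k]$---but this does not affect the argument since the robustness condition is stated on $\mathcal{G}_\mathcal{S}[k]$, which excludes such paths).
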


\begin{remark}\label{tighter35}
	We emphasize that our graph condition for static networks (i.e., the subgraph $\mathcal{G}_\mathcal{S}$ is an $(f+1)$-robust following graph with $1$ hop) in Proposition~\ref{proposition_secure} is tighter than the sufficient condition in \cite{rezaee2021resiliency} studying resilient dynamic leader-follower consensus in static networks. We first claim that the condition for static networks in \cite{usevitch2020resilient} is equivalent to the one in \cite{rezaee2021resiliency} with the secure leader removed. We can easily see this from definitions of the two conditions. The condition in \cite{rezaee2021resiliency} is that $\mathcal{G}$ is a $(2f +1)$-robust leader-follower graph, which requires that $|\mathcal{W}_\mathcal{L}|\geq 2f+1$ and any nonempty set $\mathcal{S} \subseteq \mathcal{W} \setminus \mathcal{W}_\mathcal{L}$ contains at least one node having $2f+1$ in-neighbors outside $\mathcal{S}$. Meanwhile, the condition for static networks in \cite{usevitch2020resilient} is that $\mathcal{G}[k]$ is strongly $(Q, t_0, 2f+1)$-robust w.r.t. the set $\mathcal{L}$ where $Q=1$; this requires that $|\mathcal{L}|\geq 2f+1$ and any nonempty set $\mathcal{S} \subseteq \mathcal{V} \setminus \mathcal{L}$ contains at least one node having $2f+1$ in-neighbors outside $\mathcal{S}$. Invoking Lemma~\ref{tighter1}, we conclude that our graph condition for static networks in Proposition~\ref{proposition_secure} is tighter than the one in \cite{rezaee2021resiliency}.
\end{remark}

From Remark~\ref{tighter35}, one can see that the leader-follower graph structure for static references \cite{usevitch2020resilient} and that for dynamic references \cite{rezaee2021resiliency} are the same. The reason is that the graph structure characterizes the information flow of leader-follower consensus. Meanwhile, agents' dynamics may vary from tracking a static reference to tracking a dynamic reference.

\begin{remark}
	We compare our results for secure leader agents with several related works.
	Our graph condition for static networks in Proposition~\ref{proposition_secure} is equivalent to the necessary and sufficient condition for the CPA algorithm in \cite{tseng2015broadcast} to succeed. Moreover, our necessary and sufficient condition with $1$ hop for static networks is tighter than the sufficient conditions in \cite{usevitch2020resilient,rezaee2021resiliency} for static networks. 
	Besides, for the multi-hop case, our condition with $l\geq 2$ hops is even tighter than the conditions in the literature \cite{tseng2015broadcast,usevitch2020resilient,rezaee2021resiliency} since the graph robustness generally increases (and definitely does not decrease) as the relay range $l$ increases.
\end{remark}

\subsection{Properties of Jointly (f+1)-Robust Following Graphs}

In the following lemma, we list several properties of jointly $(f+1)$-robust following graphs with $l$ hops. Its proof can be found in the Appendix B.

\begin{lemma}\label{2f+1leaders}
	If graph $\mathcal{G}[k]$ is a jointly $(f+1)$-robust following graph with $l$ hops under the $f$-local model, then the following hold:
	\begin{enumerate}
		\item $|\mathcal{L}| \geq 2f+1$.
		\item $\forall [k_t, k_{t+1}), \exists K_i \in [k_t, k_{t+1})$, $\exists i \in \mathcal{W}^\mathcal{N}$ s.t. $|\mathcal{N}_i^{l-}[K_i] \cap \mathcal{L}| \geq 2f+1$.
		\item $|\mathcal{W}_\mathcal{L}[k_t, k_{t+1})| = | \{i\in \mathcal{W} :  \exists d \in \mathcal{L} \medspace \textup{s.t.} \medspace  i\in \mathcal{N}_d^{1+}[k_t, k_{t+1}) \} | \geq 2f+1$ in graph $\mathcal{U}_{\mathcal{G}}[k_t, k_{t+1})$.
		\item $\forall [k_t, k_{t+1}), \exists K_i \in [k_t, k_{t+1})$ s.t. $|\mathcal{N}_i^{1-}[K_i]| \geq 2f+1, \forall i \in \mathcal{W}$. Moreover, the minimum number of directed edges of $\mathcal{U}_{\mathcal{G}}[k_t, k_{t+1})$ with minimum $|\mathcal{L}|$ is $(2f+1)|\mathcal{W}|$.
	\end{enumerate}
\end{lemma}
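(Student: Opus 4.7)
The plan is to prove each of the four claims by applying Definition~\ref{robust_following} with carefully chosen $f$-local removal sets $\mathcal{F}$ and subsets $\mathcal{S}\subseteq\mathcal{H}\setminus\mathcal{L}$, exploiting the universal quantification over $\mathcal{F}$ to force structural bounds. Claims 1 and 3 share a contradiction template: assume the conclusion fails, pick $\mathcal{F}$ to cover a bottleneck set (all leaders for claim 1, the direct-leader-followers $\mathcal{W}_\mathcal{L}$ for claim 3), so that independent paths from outside some $\mathcal{S}$ must squeeze through the few surviving nodes.

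For claim 1, if $|\mathcal{L}|\leq 2f$, I choose $\mathcal{F}\subseteq\mathcal{L}$ of size $\min(|\mathcal{L}|,f)$ and $\mathcal{S}=\mathcal{H}\setminus\mathcal{L}$. The sources outside $\mathcal{S}$ within $\mathcal{H}$ are the at most $f$ leaders in $\mathcal{L}\setminus\mathcal{F}$, so any $i\in\mathcal{S}$ has at most $f$ independent paths from outside, violating the $(f+1)$-robust condition. For claim 3, I assume $|\mathcal{W}_\mathcal{L}[k_t,k_{t+1})|\leq 2f$ and choose $\mathcal{F}$ of $\min(|\mathcal{W}_\mathcal{L}|,f)$ nodes in $\mathcal{W}_\mathcal{L}[k_t,k_{t+1})$ with $\mathcal{S}=\mathcal{W}\setminus\mathcal{W}_\mathcal{L}[k_t,k_{t+1})$; any path in $\mathcal{G}_\mathcal{H}[K_i]$ from outside $\mathcal{S}$ into $i\in\mathcal{S}$ must enter $\mathcal{S}$ through some node of $\mathcal{W}_\mathcal{L}\setminus\mathcal{F}$ (size $\leq f$), and since independent paths use disjoint internals, at most $f$ such paths exist, contradicting robustness.

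For claim 2, I apply Definition~\ref{robust_following} with $\mathcal{F}$ consisting of $f$ leaders and $\mathcal{S}=\mathcal{W}^\mathcal{N}$, producing a normal follower $i$ and time $K_i$ with $\geq f+1$ independent $l$-hop paths originating in $\mathcal{L}\setminus\mathcal{F}$; these paths yield $\geq f+1$ distinct leaders in $\mathcal{L}\setminus\mathcal{F}$ within $\mathcal{N}_i^{l-}[K_i]$. To upgrade to $|\mathcal{N}_i^{l-}[K_i]\cap\mathcal{L}|\geq 2f+1$, I argue by contradiction: if every normal follower had at most $2f$ leader $l$-hop in-neighbors throughout the interval, then varying $\mathcal{F}$ over different $f$-subsets of $\mathcal{L}$ and reapplying robustness forces the surviving $f+1$ paths to witness $f+1$ additional leader in-neighbors outside the newly removed $f$ leaders, conflicting with the assumed cap. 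Claim 4 proceeds analogously for each $i\in\mathcal{W}$ with $\mathcal{S}=\{i\}$: independent paths to $i$ enter via distinct 1-hop in-neighbors, so any $f$-local $\mathcal{F}\not\ni i$ yields $|\mathcal{N}_i^{1-}[K_i]\setminus\mathcal{F}|\geq f+1$, and adversarially choosing $\mathcal{F}$ to contain $f$ of $i$'s candidate 1-hop in-neighbors over $[k_t,k_{t+1})$ forces $|\mathcal{N}_i^{1-}[K_i]|\geq 2f+1$. The minimum edge count $(2f+1)|\mathcal{W}|$ in $\mathcal{U}_\mathcal{G}[k_t,k_{t+1})$ then follows by summing the $\geq 2f+1$ guaranteed in-edges over all $|\mathcal{W}|$ followers, with minimum $|\mathcal{L}|=2f+1$ attained by claim 1.

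The main obstacle is bridging from $f+1$ independent paths (the direct consequence of robustness) to the $2f+1$ in-neighbors or leader in-neighbors required by claims 2 and 4. The resolution rests on the universality of the $(f+1)$-robust following condition over all $f$-local $\mathcal{F}$: by adversarially including $f$ relevant in-neighbors of the target follower $i$ in $\mathcal{F}$, the $f+1$ surviving independent paths must originate from $f+1$ additional in-neighbors outside $\mathcal{F}$, totaling $2f+1$. The delicate point is that the time $K_i$ promised by robustness depends on $\mathcal{F}$, so coordinating the choice of $\mathcal{F}$ with a target time requires either a union-over-times argument across $[k_t,k_{t+1})$ or an existential selection once the $f$-subset $\mathcal{F}$ is fixed.
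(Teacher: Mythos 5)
Your overall strategy --- proof by contradiction, instantiating Definition~\ref{robust_following} with an adversarially chosen $f$-local set $\mathcal{F}$ and a bottleneck subset $\mathcal{S}$ --- is exactly the paper's, and your arguments for claims 1 and 3 match the paper's essentially step for step: the surviving leaders (resp.\ the surviving direct followers in $\mathcal{W}_\mathcal{L}$) are the only possible sources, or form a cover, for paths entering $\mathcal{S}$ from outside, so if fewer than $2f+1$ exist, deleting $f$ of them leaves every node of $\mathcal{S}$ with at most $f$ independent incoming paths, contradicting the definition.

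The genuine gap is in your claim-2 upgrade from $f+1$ to $2f+1$. The follower $i$ produced by Definition~\ref{robust_following} is existentially quantified and depends on $\mathcal{F}$: removing a different $f$-subset of $\mathcal{L}$ and reapplying robustness yields $f+1$ leader in-neighbors of a \emph{possibly different} follower at a possibly different time, so the counts never accumulate to $2f+1$ at a single node and no conflict with a per-node cap of $2f$ ever arises. The paper does not iterate over choices of $\mathcal{F}$; it negates the conclusion uniformly (every normal follower has at most $2f$ leaders within $l$ hops at every time of the interval) and then places the $f$ adversarial leaders so that every follower in $\mathcal{S}=\mathcal{W}^\mathcal{N}$ retains at most $f$ normal-leader path sources, concluding $\mathcal{Z}^{f+1}_{\mathcal{W}^\mathcal{N}}[k_t,k_{t+1})=\emptyset$ in one shot (this placement step is itself the tersest point of the paper's proof, since it implicitly needs the $f$ chosen leaders to hit each follower's leader-neighborhood). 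Your claim-4 argument survives because $\mathcal{S}=\{i\}$ pins the witness down to $i$ itself; the time-coordination issue you flag at the end --- that a single $\mathcal{F}$ must simultaneously cover $f$ in-neighbors of $i$ at every time of the interval --- is real, and the paper glosses over it as well rather than resolving it.
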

\vspace{2mm}

From Lemma~\ref{2f+1leaders}, we see that there are several necessary graph conditions for our algorithms to achieve resilient leader-follower consensus under the $f$-local adversarial model. First, there must be at least $2f+1$ leaders in $\mathcal{L}$. Second, there must be at least one follower node having $2f+1$ leaders in its $l$-hop neighbors at some time step in each $[k_t, k_{t+1})$. Third, there must be at least $2f+1$ follower nodes having incoming edges directly from the leaders in each $[k_t, k_{t+1})$. Lastly, each follower node must have $2f+1$ incoming edges at some time step in each $ [k_t, k_{t+1})$.

Moreover, in the special case of the static network $\mathcal{G}$, for our algorithms to achieve leader-follower consensus, these conditions must hold at each time instant. In particular, it must have directed edges no less than $(2f+1)|\mathcal{W}|$. This requirement is consistent with the one reported in \cite{rezaee2021resiliency}. However, we emphasize that our graph condition is tighter as mentioned earlier and covers a wider range of graphs satisfying our condition. Furthermore, we can utilize undirected edges to relax the heavy connectivity requirement. For example, in Fig.~\ref{15node}, $\mathcal{G}[k]$ has 28 directed/undirected edges while it needs 50 directed edges to satisfy the condition in \cite{rezaee2021resiliency} for followers.

\section{Resilient Leader-Follower Consensus in Second-Order MASs}\label{sec_second}

In this section, we propose a novel algorithm to deal with agents having second-order dynamics and to solve resilient leader-follower consensus in time-varying networks. We characterize tight graph conditions for the algorithm to succeed under the Byzantine model. 

We consider a second-order MAS with time-varying communication network $\mathcal{G}[k] = (\mathcal{V},\mathcal{E}[k])$. 
Each follower node $i\in \mathcal{W}$ has a double-integrator dynamics from \cite{dibaji2017resilient,ren2011distributed}, whose discretized form is given as 
\begin{align}\label{secondorder}
	\hat{x}_i[k+1]&= \hat{x}_i[k] + Tv_i[k] + \frac{T^2}{2} u_i[k], \nonumber \\ 
	v_i[k+1]&= v_i[k] + Tu_i[k],
\end{align}
where $v_i[k]$, $u_i[k] \in \mathbb{R}$, and $T$ are, respectively, the velocity, the control input of node $i$ at time $k$, and the sampling period. Moreover, $\hat{x}_i[k]=x_i[k]- \delta_i$, where $x_i[k] \in \mathbb{R}$ is the absolute position of node $i$ and $\delta_i \in \mathbb{R}$ is a constant representing the desired relative position of node $i$ in a formation. For the sake of simplicity, we call $\hat{x}_i[k]$ as the agents’ positions.

Meanwhile, leaders in $\mathcal{L}$ have the same dynamics as \eqref{secondorder}. However, since we consider the static reference value in this paper, we assume that each normal leader agent $d\in \mathcal{L}^\mathcal{N}$ updates its position according to the reference function in \eqref{leader} (i.e., $v_d[k]=0, u_d[k]=0, \forall k\in [k_1, \infty)$) and propagates $\hat{x}_d[k]=x_d[k]- \delta_d$ to followers, where $\delta_d \in \mathbb{R}$ is the desired relative position of leader $d$ in a formation.

At each time $k$, the control input $u_i[k]$ of each follower node $i \in\mathcal{W}$ utilizes the relative positions of its neighbors and its own velocity \cite{ren2011distributed}:
\begin{align}\label{normalupdate2}
	u_i[k]=\sum_{j =1}^{n} a_{ij}[k] (\hat{x}_j[k] - \hat{x}_i[k]) -\beta v_i[k],  
\end{align}
where $\beta$ is a positive constant, and $a_{ij}[k] > 0$ if $j \in \mathcal{N}_i^{l-}[k]$ and $a_{ij}[k] = 0$ otherwise.

When there is no attack, we can derive from the results in \cite{ren2011distributed} that if in graph $\mathcal{G}[k]$, all followers jointly have directed paths originating from at least one leader across each time interval $[k_t, k_{t+1})$ and with $\beta$ and $T$ properly chosen, then all followers reach consensus on the reference position value in the sense that they come to formation and stop asymptotically:
\begin{align*}
	x_i[k]- x_d[k] &\to \delta_i -\delta_d, \\
	v_i[k] &\to 0 \medspace\medspace \textup{as}  \medspace\medspace k \to \infty, \medspace \forall i \in \mathcal{W}, \forall d \in \mathcal{L}.
\end{align*}

\begin{algorithm}[t]
	\caption{MDP-MSR Algorithm }
	\LinesNumbered 
	\KwIn{Node $i$ knows $\hat{x}_i[0]$, $\mathcal{N}_i^{l-}[k]$, $\mathcal{N}_i^{l+}[k]$. }
	
	\For{$k\geq0$}{
		
		\SetKwBlock{newbox}{1) Exchange messages:}{}
		\newbox{
			\SetAlgoVlined
			Send $m_{ij}[k]=(\hat{x}_i[k],P_{ij}[k])$ to $\forall j\in \mathcal{N}_i^{l+}[k]$. 
			
			Receive $m_{ji}[k]=(\hat{x}_j[k],P_{ji}[k])$ from $\forall j\in \mathcal{N}_i^{l-}[k]$ and store them in $\mathcal{M}_i[k]$.
			
			Sort $\mathcal{M}_i[k]$ in an increasing order based on the message values (i.e., $\hat{x}_j[k]$ in $m_{ji}[k]$).
		}

		\textbf{2) Remove extreme values using Step 2 of Algorithm~1}
		
		\SetKwBlock{newbox}{3) Update values using \eqref{secondorder} with control input $u_i[k]$ given by:}{}
		\newbox{
			\SetAlgoVlined
			$\medspace\medspace a_{i}[k]=1/(\left| \mathcal{M}_i[k]\setminus \mathcal{R}_i[k] \right| )$,
			\begin{align}\label{msrupdate2}
				u_i[k]&=\sum_{m\in \mathcal{M}_i[k]\setminus \mathcal{R}_i[k]} a_{i}[k] (\mathrm{value}(m) - \hat{x}_i[k])   \nonumber\\
				& \medspace\medspace \medspace\medspace\medspace     -\beta v_i[k].  
			\end{align}
		}
		\KwOut{$\hat{x}_i[k+1]$, $v_i[k+1]$.}
	}
	%\vspace*{-1.0mm}
\end{algorithm}

In our resilient leader-follower consensus problem of the second-order MAS \eqref{secondorder}, adversary nodes may not follow the update rule \eqref{normalupdate2} and even sends faulty values to neighbors to prevent the normal nodes from reaching consensus. The problem is the same as Problem~\ref{problem} except that agents exchange $\hat{x}$ values with neighbors.

For this problem, we present a novel algorithm called the Multi-hop Double-integrator Position-based MSR (MDP-MSR) algorithm in Algorithm~2. At each time $k$, each follower $i \in \mathcal{W}^\mathcal{N}$ exchanges $\hat{x}_i[k]$ with its neighbors within $l$ hops and utilizes the MMC technique to filter away the extreme values as in Algorithm~1. Finally, it updates its value using the remaining ones.

The control input in \eqref{msrupdate2} can be transformed to the following form:
\begin{align}\label{msrupdate22}
	u_i[k]=\sum_{j =1}^{n_N} a_{ij}[k] (\hat{x}_j[k] - \hat{x}_i[k]) -\beta v_i[k].  
\end{align}
where $a_{ij}[k] > 0$ if $m_{ji}[k]\in \mathcal{M}_i[k]\setminus \mathcal{R}_i[k]$ and $a_{ij}[k] = 0$ otherwise. Moreover, for $T$ and $\beta$, we assume that
\begin{align}\label{BT}
	1+ \frac{T^2}{2}  \leq \beta T \leq  2- \frac{T^2}{2}.
\end{align}
Regarding the leader-follower consensus error, define three variables $\forall k\geq 1$:
\begin{align}\label{consensus_error2}
	\overline{z}[k]&= \max_{i\in \mathcal{W}^\mathcal{N}, \medspace d\in \mathcal{L}^\mathcal{N}} \medspace \{ \hat{x}_d[k],\hat{x}_d[k-1], \hat{x}_i[k] , \hat{x}_i[k-1] \}, \nonumber \\[1mm]
	\underline{z}[k]&= \min_{i\in \mathcal{W}^\mathcal{N}, \medspace d\in \mathcal{L}^\mathcal{N}} \medspace \{ \hat{x}_d[k],\hat{x}_d[k-1], \hat{x}_i[k] , \hat{x}_i[k-1] \}, \nonumber \\[1mm]
	\hat{V}[k] &= \overline{z}[k]-\underline{z}[k]. 
\end{align}

In the next theorem, we provide a necessary and sufficient graph condition for the MDP-MSR algorithm to achieve resilient leader-follower consensus in time-varying networks.

\begin{theorem}\label{theorem_secondorder}
	Consider the time-varying network $\mathcal{G}[k] = (\mathcal{V},\mathcal{E}[k])$ with $l$-hop communication, where each normal follower node $i\in \mathcal{W}^\mathcal{N}$ updates its value according to the MDP-MSR algorithm with parameter $f$. Under the $f$-local adversarial set $\mathcal{A}$ and the assumption that $r [k]$ is constant $\forall k\in [k_1, \infty)$, resilient leader-follower consensus is achieved if and only if $\mathcal{G}[k]$ is a jointly $(f+1)$-robust following graph with $l$ hops.
\end{theorem}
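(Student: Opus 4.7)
The plan is to follow the two-part structure of Theorem~\ref{theorem_firstorder}, with the crucial new ingredient being a reformulation of the combined dynamics \eqref{secondorder}--\eqref{msrupdate22} as a convex combination spanning two consecutive time steps. For necessity, I would reuse essentially verbatim the adversarial construction from Theorem~\ref{theorem_firstorder}: assuming $\mathcal{G}[k]$ fails the jointly $(f+1)$-robust following condition, take the violating $f$-local set as the Byzantine set $\mathcal{A}$, and initialise some offending nonempty subset $\mathcal{S}\subseteq \mathcal{W}^\mathcal{N}$ at position $a<r[k_1]$ with zero velocity while all other normal nodes sit at $r[k_1]$ with zero velocity. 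Having each adversary transmit $a$ to nodes in $\mathcal{S}$ and $r[k_1]$ to the rest, the MMC bound forces every $i\in\mathcal{S}$ to discard every kept value exceeding $a$, so the position-difference part of $u_i[k]$ is identically zero; combined with $v_i[\hat{k}]=0$ and the damping term, $u_i[k]\equiv 0$ and $v_i[k]\equiv 0$ follow inductively, leaving $\hat{x}_i[k]$ frozen at $a$ and precluding consensus.

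For sufficiency, the key preliminary step is to rewrite the update in a purely position-based form. From the telescoping identity $\hat{x}_i[k+1]-2\hat{x}_i[k]+\hat{x}_i[k-1]=\frac{T^2}{2}\bigl(u_i[k]+u_i[k-1]\bigr)$ together with $v_i[k]+v_i[k-1]=\frac{2}{T}\bigl(\hat{x}_i[k]-\hat{x}_i[k-1]\bigr)$, the velocities can be eliminated from $u_i[k]+u_i[k-1]$, yielding an expression for $\hat{x}_i[k+1]$ with coefficient $2-\beta T-\frac{T^2}{2}$ on $\hat{x}_i[k]$, coefficient $\beta T-1-\frac{T^2}{2}$ on $\hat{x}_i[k-1]$, and coefficients $\frac{T^2}{2}a_{ij}[k]$ and $\frac{T^2}{2}a_{ij}[k-1]$ on the kept neighbour values $\hat{x}_j[k]$ and $\hat{x}_j[k-1]$ respectively. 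Condition \eqref{BT} guarantees that every coefficient is nonnegative and uniformly bounded below by some $\alpha>0$, and a direct summation shows that they sum to one. Hence $\hat{x}_i[k+1]$ is a genuine convex combination of normal values drawn from the last two time steps, which is exactly why $\hat{V}[k]$ in \eqref{consensus_error2} is defined over a two-step window.

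With this form in hand, the safety analogue of Lemma~\ref{lemma_safety} is immediate for $\hat{V}[k]$: the convex-combination property keeps every $\hat{x}_i[k]$ inside $[\underline{z}[k_1],\overline{z}[k_1]]$ and renders $\hat{V}[k]$ non-increasing. I would then mirror the three-step contraction argument of Theorem~\ref{theorem_firstorder}, with the sets $\mathcal{Z}_1$, $\mathcal{Z}_2$, $\mathcal{U}$ now defined on the paired quantities $(\hat{x}_i[k],\hat{x}_i[k-1])$. In each interval $[k_t,k_{t+1})$, Definition~\ref{robust_following} supplies a normal follower with at least $f+1$ independent $l$-hop incoming paths from outside the current extreme set; after MMC filtering it must retain at least one interior normal value, and the uniform lower bound $\alpha$ drags that follower strictly into the interior. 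Iterating over at most $|\mathcal{W}^\mathcal{N}|+1$ consecutive intervals empties the extreme set, delivering a uniform contraction $\hat{V}[k_1+(|\mathcal{W}^\mathcal{N}|+1)K]\leq(1-\alpha^{c})\hat{V}[k_1]$ for an appropriate exponent $c$, and hence $\hat{V}[k]\to 0$ geometrically.

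The main obstacle is the two-step memory and the need to conclude not only $\hat{x}_i[k]\to r[k_1]$ but also $v_i[k]\to 0$, which is what \eqref{reach_consensus} actually demands (and what rules out residual drift). Once $\hat{V}[k]\to 0$, the successive differences $\hat{x}_i[k]-\hat{x}_i[k-1]\to 0$, so the position-difference part of $u_i[k]$ vanishes and the recursion $v_i[k+1]=(1-\beta T)v_i[k]+T\sum_j a_{ij}[k](\hat{x}_j[k]-\hat{x}_i[k])$ becomes an asymptotically contractive map since $|1-\beta T|<1$ under \eqref{BT}; a standard perturbation argument then gives $v_i[k]\to 0$. A secondary bookkeeping obstacle is that the two-step window forces the interval accounting to use $K+1$ rather than $K$ throughout, which I would simply absorb into the exponent without altering the overall structure.
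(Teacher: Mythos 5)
Your proposal is correct and follows essentially the same route as the paper: the necessity construction is identical, and your telescoping elimination of the velocities yields exactly the paper's two-step recursion \eqref{msrupdatex2}, whose coefficients are shown via \eqref{BT} to form a convex combination over times $k$ and $k-1$, after which the contraction argument of Theorem~\ref{theorem_firstorder} is reused with $\alpha$ replaced by a uniform lower bound on these coefficients. The only (immaterial) divergence is at the very end: you deduce $v_i[k]\to 0$ from the contractive velocity recursion with $|1-\beta T|<1$, whereas the paper reads it off from the position increment $\hat{x}_i[k+1]-\hat{x}_i[k]\to (T-\beta T^2/2)\,v_i[k]$; both are valid under \eqref{BT}.
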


\begin{proof}
	\textit{(Necessity)} If $\mathcal{G}[k]$ is not a jointly $(f+1)$-robust following graph with $l$ hops, by the same reasoning as in the necessity proof of Theorem~\ref{theorem_firstorder}, we obtain that there is a nonempty subset $\mathcal{S}\subseteq \mathcal{N}\setminus \mathcal{L}$ such that \eqref{atmostf} holds.

	Suppose that $\hat{x}_i[\hat{k}]=a , \medspace \forall i\in \mathcal{S}$, and $\hat{x}_j[\hat{k}]=r[k_1], \medspace \forall j\in \mathcal{N}\setminus \mathcal{S}$, where $a< r[k_1]$ is constant. Moreover, $v_i[\hat{k}]=0, \medspace \forall i\in \mathcal{V}$. Assume that the Byzantine nodes send $a$ and $r[k_1]$ to the nodes in $\mathcal{S}$ and $ \mathcal{N}\setminus \mathcal{S}$, respectively. 
	For any normal node $i \in \mathcal{S}$, it removes all the values of neighbors outside $\mathcal{S}$ since the message cover of these values has cardinality equal to $f$ or less. According to Algorithm~2, such normal nodes will keep their values, i.e., $\hat{x}_i[k]=a, v_i[k]=0, \medspace \forall i\in \mathcal{S}, \medspace \forall k\geq \hat{k}$. Thus, resilient leader-follower consensus is not achieved.

	\textit{(Sufficiency)}
	We first prove that $\big[ \underline{z}[k+1], \overline{z}[k+1] \big] \subset \big[ \underline{z}[k], \overline{z}[k] \big], \forall k\in [k_1, \infty)$. For $k \geq 1$, plugging \eqref{msrupdate22} into \eqref{secondorder}, we obtain $\forall i\in \mathcal{W}^\mathcal{N}$,
	\begin{align}
		\hat{x}_i[k+1]&= \hat{x}_i[k] + Tv_i[k] - \frac{T^2}{2} \beta v_i[k]    \nonumber \\ 
		&\medspace\medspace\medspace\medspace + \frac{T^2}{2} \sum_{j =1}^{n_N} a_{ij}[k] (\hat{x}_j[k] - \hat{x}_i[k]) ,  \label{msrupdatex} \\ 
		v_i[k+1]&= v_i[k] - T \beta v_i[k] + T\sum_{j =1}^{n_N} a_{ij}[k] (\hat{x}_j[k] - \hat{x}_i[k])  . \label{msrupdatev}
	\end{align}
	From \eqref{msrupdatex}, we notice that
	\begin{align}\label{msrupdatex1}
		\hat{x}_i&[k+1]- (1- T \beta)\hat{x}_i[k] \nonumber \\
		&=\hat{x}_i[k]- (1- T \beta)\hat{x}_i[k-1] \nonumber \\ 
		&\medspace\medspace\medspace\medspace + \frac{T^2}{2}\Big(\sum_{j =1}^{n_N} a_{ij}[k] (\hat{x}_j[k] - \hat{x}_i[k]) - (1- T \beta) \nonumber \\ 
		&\medspace\medspace\medspace\medspace  \times \sum_{j =1}^{n_N} a_{ij}[k-1] (\hat{x}_j[k-1] - \hat{x}_i[k-1]) \Big) \nonumber \\ 
		&\medspace\medspace\medspace\medspace + (T-\frac{T^2}{2} \beta) \big(v_i[k]- (1- T \beta)v_i[k-1] \big). 
	\end{align}
	Besides, from \eqref{msrupdatev}, we can obtain
	\begin{align}\label{msrupdatev1}
		v_i[k] &= (1- T \beta)v_i[k-1] \nonumber \\ 
		&\medspace\medspace\medspace\medspace +T\sum_{j =1}^{n_N} a_{ij}[k-1] (\hat{x}_j[k-1] - \hat{x}_i[k-1]) .
	\end{align}
	Thus, plugging \eqref{msrupdatev1} into \eqref{msrupdatex1}, we further have
	\begin{align}\label{msrupdatex2}
		\hat{x}_i[k+1] &=(2- T \beta)\hat{x}_i[k] + \frac{T^2}{2}\sum_{j =1}^{n_N} a_{ij}[k] (\hat{x}_j[k] - \hat{x}_i[k]) \nonumber \\ 
		&\medspace\medspace\medspace\medspace + \frac{T^2}{2} \sum_{j =1}^{n_N} a_{ij}[k-1] (\hat{x}_j[k-1] - \hat{x}_i[k-1]) \nonumber \\ 
		&\medspace\medspace\medspace\medspace      -(1- T \beta)\hat{x}_i[k-1] .
	\end{align}
	Combining \eqref{BT} and \eqref{msrupdatex2}, we conclude that $\forall i\in \mathcal{W}^\mathcal{N}$, $\hat{x}_i[k+1]$ is located in the convex combination of the positions of its normal neighbors within $l$ hops and itself at time $k$ and $k-1$. Notice that $\hat{x}_i[k] \in  \big[ \underline{z}[k], \overline{z}[k] \big], \hat{x}_i[k-1] \in  \big[ \underline{z}[k], \overline{z}[k] \big], \forall i\in \mathcal{W}^\mathcal{N}$ and $\hat{x}_d[k] = \hat{x}_d[k - 1] = r [k] , \forall d\in \mathcal{L}^\mathcal{N}, \forall k\in [k_1, \infty)$, which is also in $\big[ \underline{z}[k], \overline{z}[k] \big]$. Thus, it holds that $\hat{x}_i[k+1] \in  \big[ \underline{z}[k], \overline{z}[k] \big]$, i.e.,
	\begin{align*}
		\big[ \underline{z}[k+1], \overline{z}[k+1] \big] \subset \big[ \underline{z}[k], \overline{z}[k] \big], \forall k\in [k_1, \infty).
	\end{align*}

	Next, to further prove the convergence, follow an analysis as in the proof of Theorem~\ref{theorem_firstorder}, where we replace $x_i[k]$ with $\hat{x}_i[k]$, and replace $\overline{x}[k]$ and $\underline{x}[k]$ with $\overline{z}[k]$ and $\underline{z}[k]$, respectively. Moreover, define a lower bound $\omega$ such that  all the coefficients of $\hat{x}_i[k]$ and $\hat{x}_i[k-1], \forall i\in \mathcal{N}$ in \eqref{msrupdatex2} are larger than $\omega$ and further replace $\alpha$ with $\omega$ in the proof of Theorem~\ref{theorem_firstorder}.
	
	Thereafter, we can similarly prove that $\lim_{k\to \infty} \hat{V}[k] = 0$, and thus normal followers reach leader-follower consensus on their position values. Furthermore, when the normal followers reach leader-follower consensus on position values, from \eqref{msrupdatex}, we obtain $\hat{x}_i[k+1]\to \hat{x}_i[k]-(T-\beta T^2/2)v_i[k]$ as $k\to \infty$, $\forall i\in \mathcal{W}^\mathcal{N}$.
	Under \eqref{BT}, it holds that $v_i[k]\to 0$ as $k\to \infty$, $\forall i\in \mathcal{W}^\mathcal{N}$. Therefore, resilient leader-follower consensus is achieved.
\end{proof}

To deal with the special case of static networks, it is not hard to obtain the next corollary for the MW-MSR and MDP-MSR algorithms achieving resilient leader-follower consensus.

\begin{corollary}\label{corollary_static}
	Consider the static network $\mathcal{G} = (\mathcal{V},\mathcal{E})$ with $l$-hop communication, where all nodes $\forall i\in \mathcal{W}^\mathcal{N}$ update their values according to the MW-MSR or MDP-MSR algorithm with parameter $f$. Under the $f$-local adversarial set $\mathcal{A}$ and the assumption that $r [k]$ is constant $\forall k\in [k_1, \infty)$, resilient leader-follower consensus is achieved if and only if $\mathcal{G}$ is an $(f+1)$-robust following graph with $l$ hops.
\end{corollary}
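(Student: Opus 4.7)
The plan is to reduce the corollary to a direct application of Theorems~\ref{theorem_firstorder} and~\ref{theorem_secondorder} by treating the static graph $\mathcal{G}$ as the trivial time-varying graph $\mathcal{G}[k] = \mathcal{G}$ for every $k \geq 0$. With this identification, and choosing the ISUBTI $\{[k_t,k_{t+1})\}_{t\in\mathbb{Z}_{\geq 0}}$ to consist of unit-length intervals (i.e., $k_{t+1}-k_t = 1$ for every $t$), Definition~\ref{robust_following_static} says that $\mathcal{G}$ being an $(f+1)$-robust following graph with $l$ hops is precisely the condition that this trivial time-varying graph is a jointly $(f+1)$-robust following graph with $l$ hops. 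Since $\mathcal{G}[k]$ does not change with $k$, the independent-path set $\mathcal{I}_{i,\mathcal{S}}[k]$ is time-invariant, so the joint condition relative to any ISUBTI reduces to this single reachability condition; in particular the choice of intervals does not matter.

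For sufficiency I would then invoke Theorem~\ref{theorem_firstorder} (for the MW-MSR update) or Theorem~\ref{theorem_secondorder} (for the MDP-MSR update) directly on $\mathcal{G}[k] = \mathcal{G}$. Since the jointly $(f+1)$-robust following property with $l$ hops holds, each theorem guarantees $\lim_{k \to \infty} \max_{i \in \mathcal{W}^\mathcal{N},\, d \in \mathcal{L}^\mathcal{N}} |x_i[k]-x_d[k]| = 0$, which is exactly resilient leader-follower consensus on the static graph.

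For necessity I would argue contrapositively. If $\mathcal{G}$ is not an $(f+1)$-robust following graph with $l$ hops, then there exists an $f$-local set $\mathcal{F}$ and a nonempty subset $\mathcal{S} \subseteq \mathcal{H}\setminus \mathcal{L}$ such that no $i \in \mathcal{S}$ admits $f+1$ independent paths of at most $l$ hops originating outside $\mathcal{S}$. Because the graph does not change with time, this shortcoming persists at every step, so the adversarial construction used in the necessity arguments of Theorems~\ref{theorem_firstorder} and~\ref{theorem_secondorder} carries over verbatim: take $\mathcal{F}$ to be the Byzantine set, initialize all nodes in $\mathcal{S}$ to a common value $a<r[k_1]$, and let the Byzantine agents transmit $a$ to $\mathcal{S}$ and $r[k_1]$ to $\mathcal{N}\setminus\mathcal{S}$. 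Then the MMC step of Algorithm~1 (resp.\ Algorithm~2) discards every normal cross-set message arriving at nodes in $\mathcal{S}$, their updates use only values equal to $a$, their positions (and velocities, in the second-order case) stay frozen, and consensus with the leaders fails.

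The main obstacle I expect is not technical but conceptual: one must verify that Definition~\ref{robust_following_static} truly captures the jointly robust following condition on any ISUBTI when the graph is static, so that no pathological choice of intervals could, in principle, give a strictly weaker requirement. As noted above, the time-invariance of $\mathcal{I}_{i,\mathcal{S}}[k]$ makes this equivalence immediate, and the corollary then follows as an essentially notational specialization of the two preceding theorems, with no new convergence estimates to carry out.
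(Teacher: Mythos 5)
Your proposal is correct and matches the paper's approach: the paper states this corollary without a separate proof, treating it exactly as you do — as an immediate specialization of Theorems~\ref{theorem_firstorder} and \ref{theorem_secondorder} to the constant time-varying graph $\mathcal{G}[k]=\mathcal{G}$ with unit-length intervals, where Definition~\ref{robust_following_static} makes the static robustness notion coincide with the joint one. Your added observation that the time-invariance of $\mathcal{I}_{i,\mathcal{S}}[k]$ renders the choice of ISUBTI immaterial is a sound (if implicit in the paper) justification.
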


	We see that the graph condition for the MDP-MSR algorithm under the Byzantine model is the same as the one for the MW-MSR algorithm.
	This feature shows the applicability of the key multi-hop techniques used in our algorithms. To the best of our knowledge, this paper is the first work studying resilient leader-follower consensus in MASs with second-order dynamics on agents. Besides, since our condition is defined on the original topology of the MAS, it can be verified prior to the deployment of our algorithm.
	From Theorem~\ref{theorem_secondorder} and Corollary~\ref{corollary_static}, one can see that the graph condition for time-varying networks is more relaxed than the one for static networks. Moreover, by introducing multi-hop relays to the static or time-varying MASs, we can relax the graph connectivity requirement for the one-hop MSR algorithms \cite{leblanc2013resilient,usevitch2020resilient,rezaee2021resiliency}. Through our approaches, the multi-hop relaying shows promising potentials for developing other types of resilient algorithms for enhancing security of MASs.

\begin{figure}[t]
	\centering
	%\vspace{-10pt}
	\subfigure[\scriptsize{The one-hop MW-MSR algorithm.}]{
		\includegraphics[width=3.3in,height=1.9in]{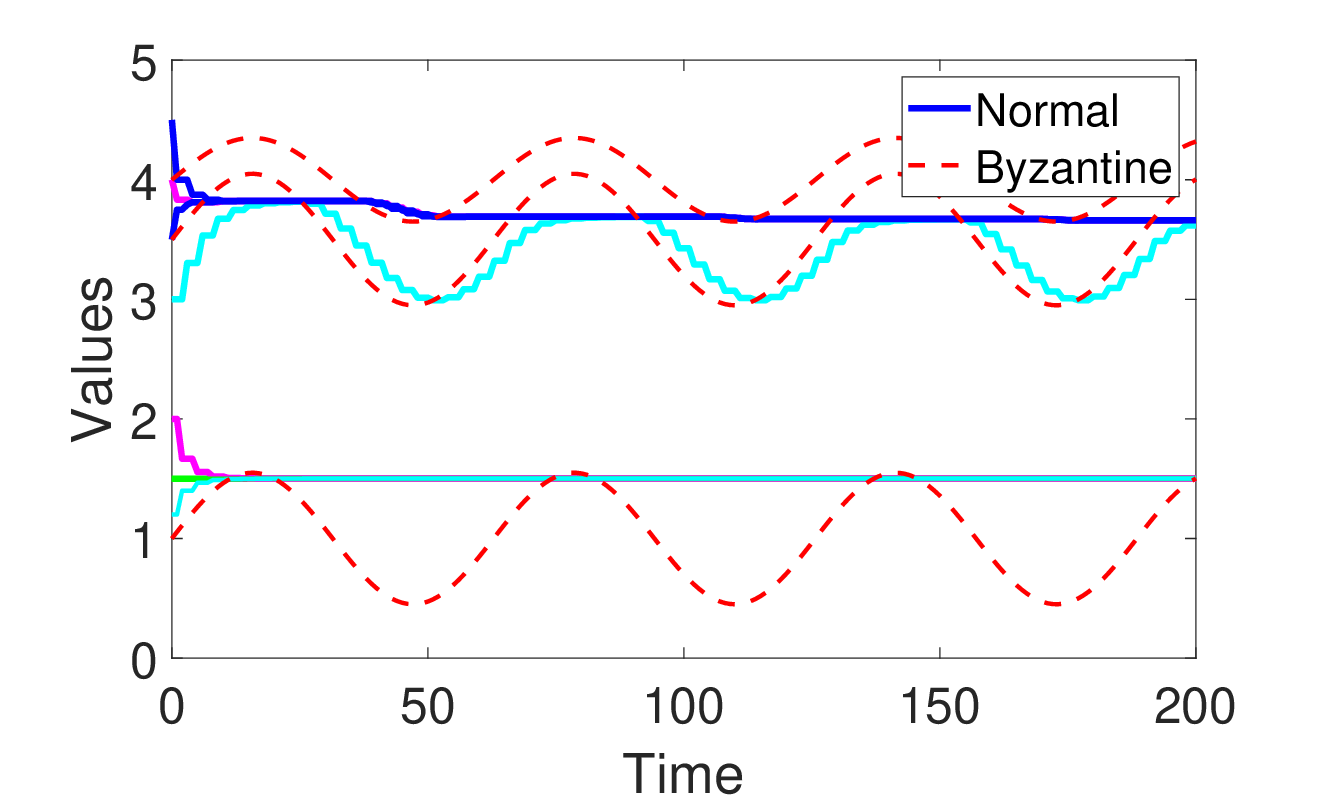}
	}
	\vspace{-3pt}
	
	\subfigure[\scriptsize{The three-hop MW-MSR algorithm.}]{
		\includegraphics[width=3.3in,height=1.9in]{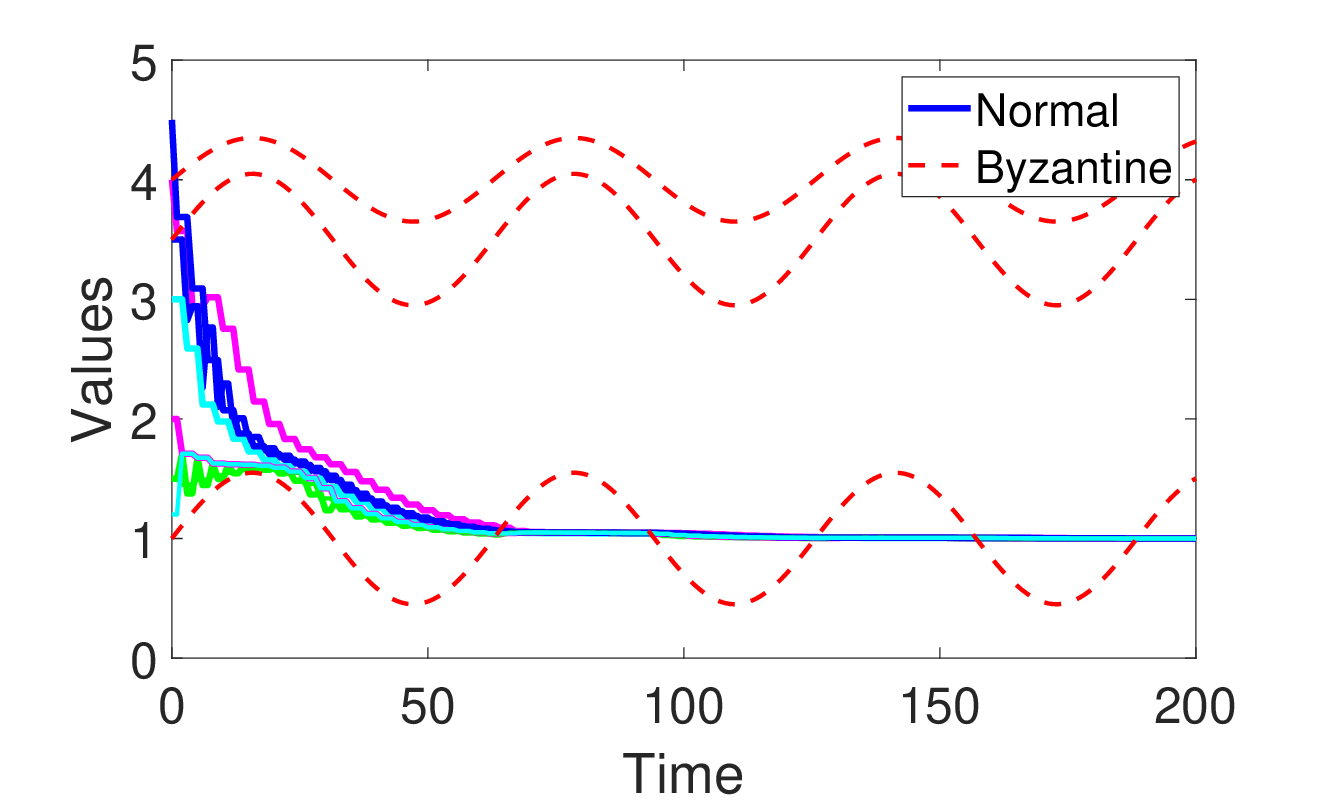}
	}
	\vspace{-3pt}
	\caption{Nodes' values in the time-varying leader-follower network in Fig.~\ref{15node} applying the MW-MSR algorithm.}
	\label{15nodevalue}
	\vspace*{-2.5mm}
\end{figure}

\section{Numerical Examples}\label{sec_example}

In this section, we conduct simulations for the MW-MSR algorithm and the MDP-MSR algorithm applied in time-varying leader-follower networks.

%\begin{figure}[t]
%	\centering
%	%\vspace{-10pt}
%	\includegraphics[width=3.5in,height=2in]{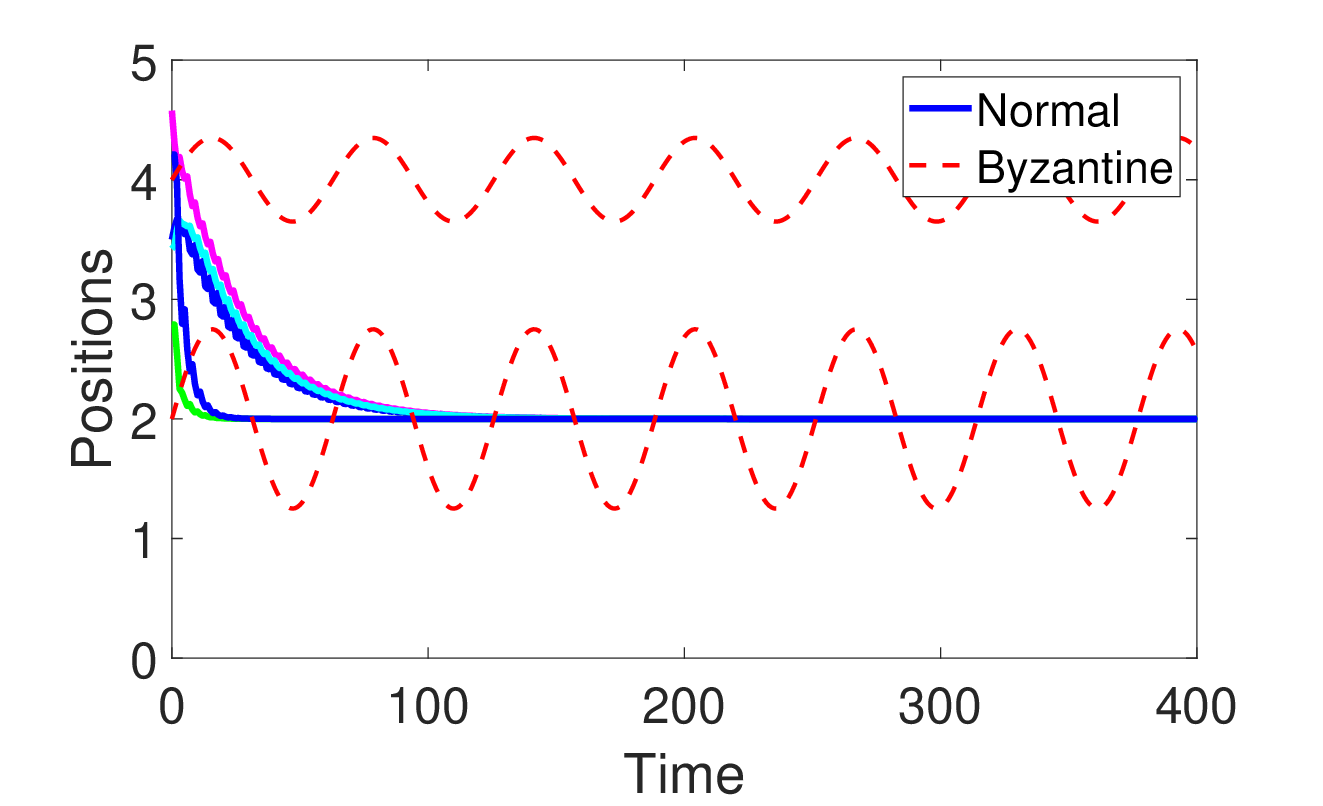}
%	
%	\vspace{-3pt}
%	\caption{Nodes' values in the time-varying network in Fig.~\ref{9node1} applying the one-hop MW-MSR algorithm.}
%	\label{9node_onehop}
%	\vspace*{-2.5mm}
%\end{figure}

\subsection{Simulation in A Time-Varying First-Order MAS}\label{sec_example1}

	In this part, we apply the MW-MSR algorithm to the time-varying leader-follower network in Fig.~\ref{15node} under the $2$-local model, i.e., with $f=2$.
	For the normal nodes, let $x_i[0]\in (1,5), \forall i \in \mathcal{W}^\mathcal{N}$ and $r[k]=1, \forall d \in \mathcal{L}^\mathcal{N}, \forall k\geq 0$. 
	This network is not a jointly $3$-robust following graph with $1$ hop. Hence, it is not robust enough for the one-hop W-MSR algorithm \cite{leblanc2013resilient,ishii2022overview,wen2023joint} to succeed under the $2$-local model. However, as we discussed in Example~\ref{discussion15node}, this network is a jointly $3$-robust following graph with $3$ hops.
	Here, we set nodes 7 and 8 to be Byzantine, and they send different values to different neighbors. Specifically, node 7 sends an oscillating value around the value of 4 to its neighbors. Moreover, node 8 sends an oscillating value around the value of 3.5 to out-neighbors in the node set $\{1, 2, 3\}$ and an oscillating value around 1 to out-neighbors in $\{4, 5\}$.

	We first apply the one-hop MW-MSR algorithm, which is equivalent to the W-MSR algorithm from \cite{leblanc2013resilient,wen2023joint}. 
	The results are given in Fig.~\ref{15nodevalue}(a), and resilient leader-follower consensus is not achieved. Note that in Fig.~\ref{15nodevalue} as well as all other figures in this section, lines in colors other than red represent the values (or trajectories) of normal followers.

\begin{figure}[t]
	\centering
	%\vspace{-10pt}
	\includegraphics[width=3.3in,height=1.9in]{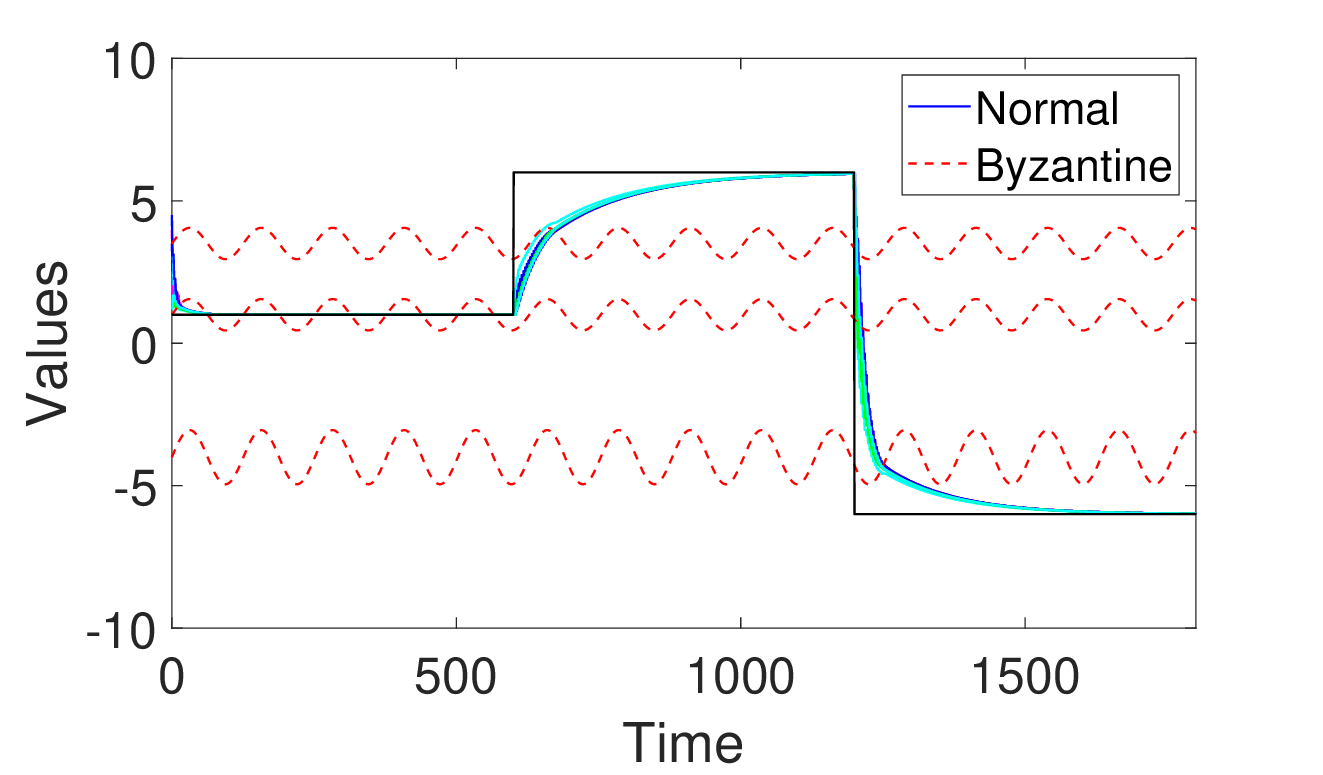}
	\vspace{-3pt}
	\caption{Normal nodes track the time-varying reference value in the time-varying leader-follower network in Fig.~\ref{15node} using the three-hop MW-MSR algorithm.}
	\label{15nodevalue_longtime}
	\vspace*{-2.5mm}
\end{figure}

Next, we apply the three-hop MW-MSR algorithm to this network. In this case, we assume that Byzantine nodes not only manipulate their own values as before but also relay false information. Specifically, when they receive a value from a neighbor and relay it to another neighbor, they manipulate it in the same way as they do to their own values.
We observe in Fig.~\ref{15nodevalue}(b) that resilient leader-follower consensus is achieved. Besides, we note that the slow speed of convergence is due to the sparsity of the graph at each time.
We further conducted another simulation when the reference $r[k]$ varies over time to show the tracking ability of our algorithm. The result is presented in Fig.~\ref{15nodevalue_longtime}. Observe that normal followers can follow the time-varying reference as long as it remains invariant for a sufficiently long period of time.
We also point out that the reference here takes values outside the range of adversarial values and the range of normal nodes' initial values. In contrast, normal nodes using leaderless resilient consensus protocols must reach consensus on a value within the range of their initial values.

\subsection{Simulation in A Time-Varying Second-Order MAS}

Next, we present the simulation of the MDP-MSR algorithm in the time-varying leader-follower network in Fig.~\ref{9node2} under the $1$-local model. We note that our methodology can be extended to decoupled multi-dimensional dynamics of agents. Therefore, in this simulation, each node $i\in \mathcal{V}$ is associated with two dimensions, i.e., $x_i[k]$ and $y_i[k]$. To be specific, each node $i \in \mathcal{V}$ exchanges $\hat{x}_i[k]$ and $\hat{y}_i[k]$ with neighbors, and we employ the MDP-MSR algorithm on each node $i\in \mathcal{W}^\mathcal{N}$ for each axis separately.
Let agents take initial values as $x_i[0]\in (2,5), y_i[0]\in (2,5), \forall i \in \mathcal{W}^\mathcal{N}$, and $r[k]=2, \forall d \in \mathcal{L}^\mathcal{N}, \forall k\geq 0$ is the reference value for both axes. Let $T=0.8$ and $\beta=1.65$, which meet the condition in \eqref{BT}.

The objective is that normal followers track the reference value to form a desired formation such that $\lim_{k\to \infty} \hat{x}_i[k] = \lim_{k\to \infty} \hat{y}_i[k] = r[k], \forall i \in \mathcal{W}^\mathcal{N}$, despite any possible misleading information transmitted by the Byzantine node.
As we discussed in Example~\ref{discussion9node}, this graph is not a jointly $2$-robust following graph with $1$ hop, and hence, is not robust enough to tolerate one Byzantine node using the one-hop MSR algorithms \cite{leblanc2013resilient,ishii2022overview,usevitch2020resilient}. 
Here, we let node 5 be Byzantine. It sends two different oscillating $\hat{x}$ values to nodes in sets $\{1,3,6\}$ and $\{4\}$. Besides, it manipulates its $\hat{y}$ values in the same way.

We first apply the one-hop MDP-MSR algorithm to the time-varying network and observe in Fig.~\ref{9nodevalue2}(a) that resilient leader-follower consensus on the $x$ axis (i.e., $\hat{x}_i[k], i\in \mathcal{W}^\mathcal{N}$) is not achieved. For the one-hop MDP-MSR algorithm to succeed, the network needs more connections. For example, if we add four edges in the network as shown in Fig.~\ref{9node1}, then it becomes a jointly $2$-robust following graph with $1$ hop. We then apply the one-hop MDP-MSR algorithm to the network in Fig.~\ref{9node1} and the results in Fig.~\ref{9nodevalue1} show that resilient leader-follower consensus is achieved. This example also verifies the results in Lemma~\ref{tighter1} and Remark~\ref{tighter35}, which separately state that our graph condition for the one-hop case is tighter than the ones in \cite{usevitch2020resilient,rezaee2021resiliency}. Indeed, the union of the graphs in Fig.~\ref{9node1} does not meet the sufficient conditions in \cite{usevitch2020resilient,rezaee2021resiliency} for the case of static networks.

\begin{figure}[t]
	\centering
	%\vspace{-10pt}
	\subfigure[\scriptsize{The one-hop MDP-MSR algorithm.}]{
		\includegraphics[width=3.3in,height=1.9in]{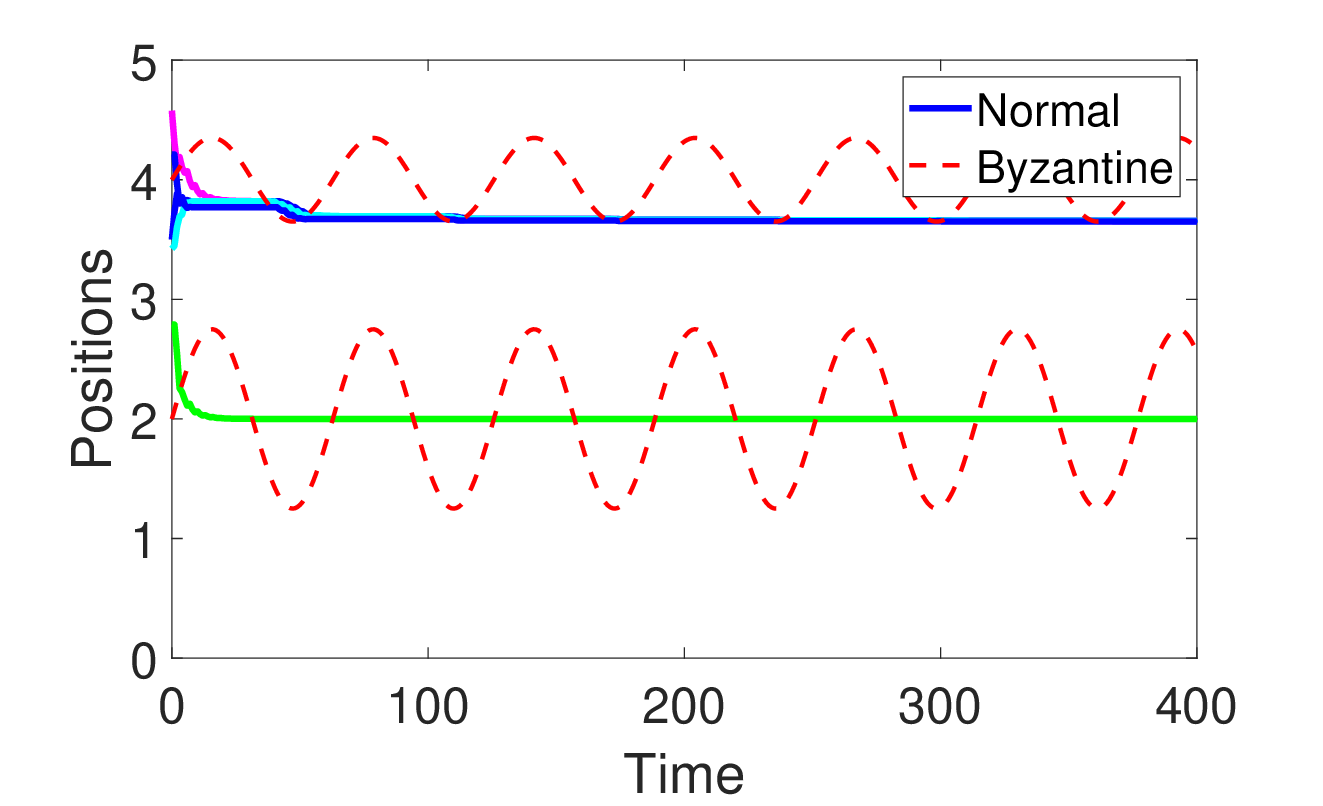}
	}
	\vspace{-3pt}
	
	\subfigure[\scriptsize{The two-hop MDP-MSR algorithm.}]{
		\includegraphics[width=3.3in,height=1.9in]{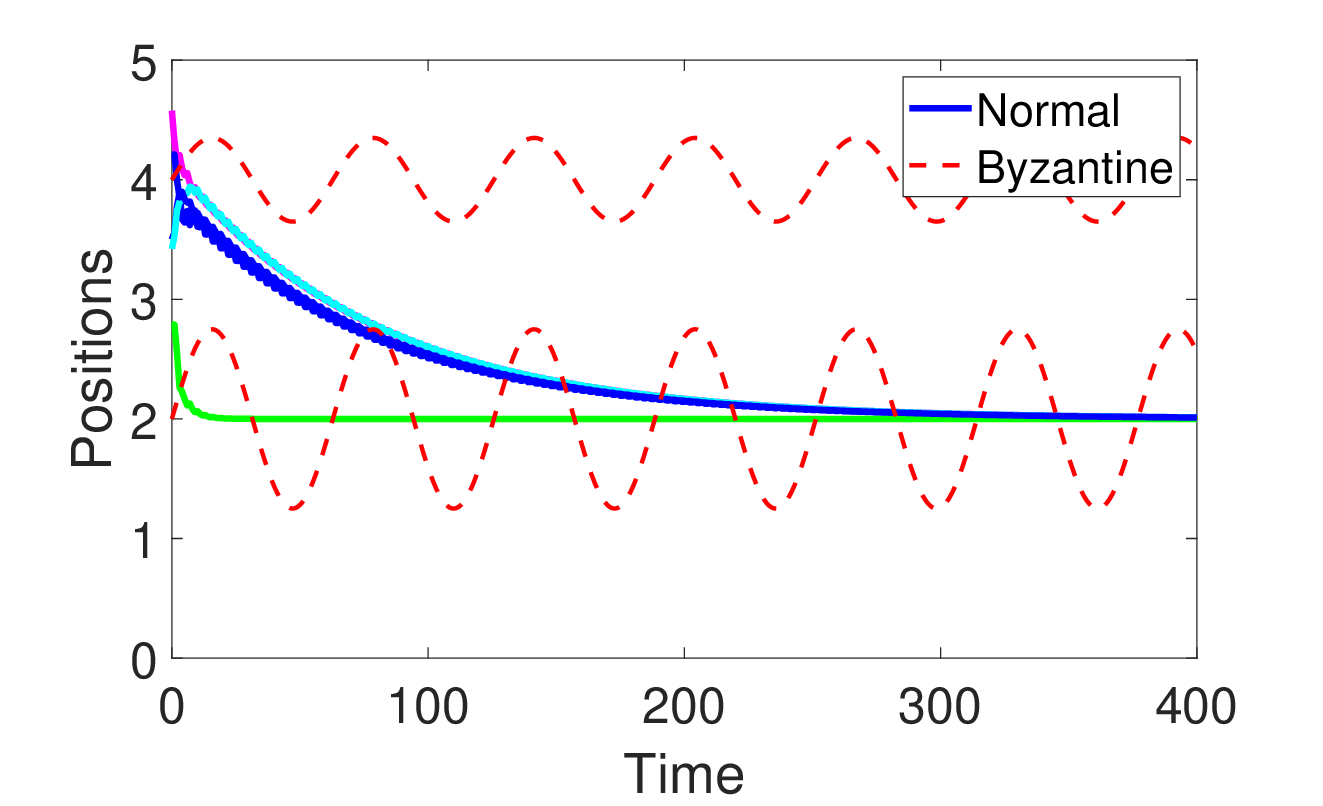}
	}
	\vspace{-3pt}
	\caption{Nodes' positions in the time-varying leader-follower network in Fig.~\ref{9node2} applying the MDP-MSR algorithm.}
	\label{9nodevalue2}
	\vspace*{-2.5mm}
\end{figure}

Alternatively, we can increase the network robustness against adversaries by introducing multi-hop relays without changing the original topology of the network.
This time, we apply the two-hop MDP-MSR algorithm to the network in Fig.~\ref{9node2}. Note that this network becomes a jointly $2$-robust following graph with $2$ hops.
As Theorem~\ref{theorem_secondorder} indicated, with one Byzantine node, it can achieve resilient leader-follower consensus. We should note that the Byzantine node may have more options for attacks under two-hop communication. Thus, we assume that node 5 not only manipulates its own values as before but also relays false information in the same way as it does to its own values. The results of the two-hop algorithm are given in Fig.~\ref{9nodevalue2}(b). Observe that resilient leader-follower consensus on the $x$ axis of normal agents is achieved despite node 5 sending two faulty values to neighbors at each time.

\begin{figure}[t]
	\centering
	%\vspace{-10pt}
	\includegraphics[width=3.3in,height=1.9in]{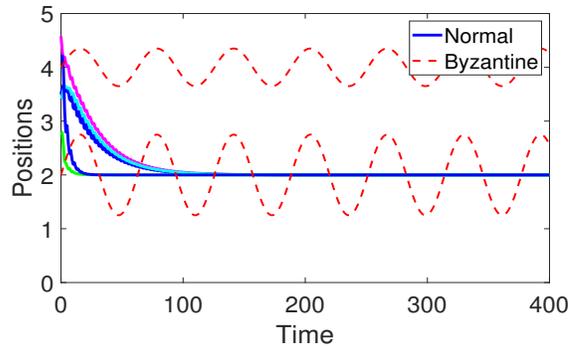}
	\vspace{-3pt}
	\caption{Nodes' positions in the time-varying leader-follower network in Fig.~\ref{9node1} applying the one-hop MDP-MSR algorithm.}
	\label{9nodevalue1}
	\vspace*{-2.5mm}
\end{figure}

Lastly, we present the results for formation control of our algorithm in the network in Fig.~\ref{9node2}. The goal for normal nodes is to form a triangle with three leaders located at one vertex of such a triangle.
First, the trajectories of normal followers applying the one-hop MDP-MSR algorithm are presented in Fig.~\ref{9nodeposition_1hop}. Here, there are two red trajectories of node 5 since it sends two different positions to neighbors. Besides, these positions are oscillating along the two red trajectories in Fig.~\ref{9nodeposition_1hop}. Observe in Fig.~\ref{9nodevalue2}(a) and Fig.~\ref{9nodeposition_1hop} that resilient leader-follower consensus for formation control is not achieved in both dimensions. Next, the two-hop MDP-MSR algorithm is applied under the same scenario, and the nodes' trajectories are presented in Fig.~\ref{9nodeposition_2hop}. It is clear in Fig.~\ref{9nodevalue2}(b) and Fig.~\ref{9nodeposition_2hop} that end positions of normal nodes form the desired formation. Thus, normal nodes using the two-hop algorithm achieve resilient leader-follower consensus on both dimensions despite misbehaviors of node 5.
Through these examples, we have verified the efficacy of the MDP-MSR algorithm.

\section{Conclusion}
In this paper, we have investigated the problem of resilient leader-follower consensus in time-varying networks when multi-hop communication is available.
Our approach is based on the MW-MSR algorithm from our previous work \cite{yuan2021resilient} studying leaderless resilient consensus.
Moreover, we have also studied agents possessing second-order dynamics and proposed the MDP-MSR algorithm to handle resilient leader-follower consensus in time-varying networks.
Our main results have characterized tight necessary and sufficient graph conditions for the proposed algorithms to succeed, which are expressed in terms of jointly robust following graphs with $l$ hops. Our graph condition for static networks is tighter than the one in \cite{usevitch2020resilient} with insecure leaders and the one in \cite{rezaee2021resiliency} with the secure leader when $l= 1$.
With multi-hop communication, we are able to enhance robustness of leader-follower networks without increasing physical communication links and obtain further relaxed graph requirements for our algorithms to succeed. The efficacy of our algorithms has also been verified through extensive numerical examples.
In future work, we intend to tackle resilient dynamic leader-follower consensus using the proposed multi-hop relay techniques, where the leaders may have time-varying reference values.

\begin{figure}[t]
	\centering
	%\vspace{-10pt}
	\includegraphics[width=3.1in]{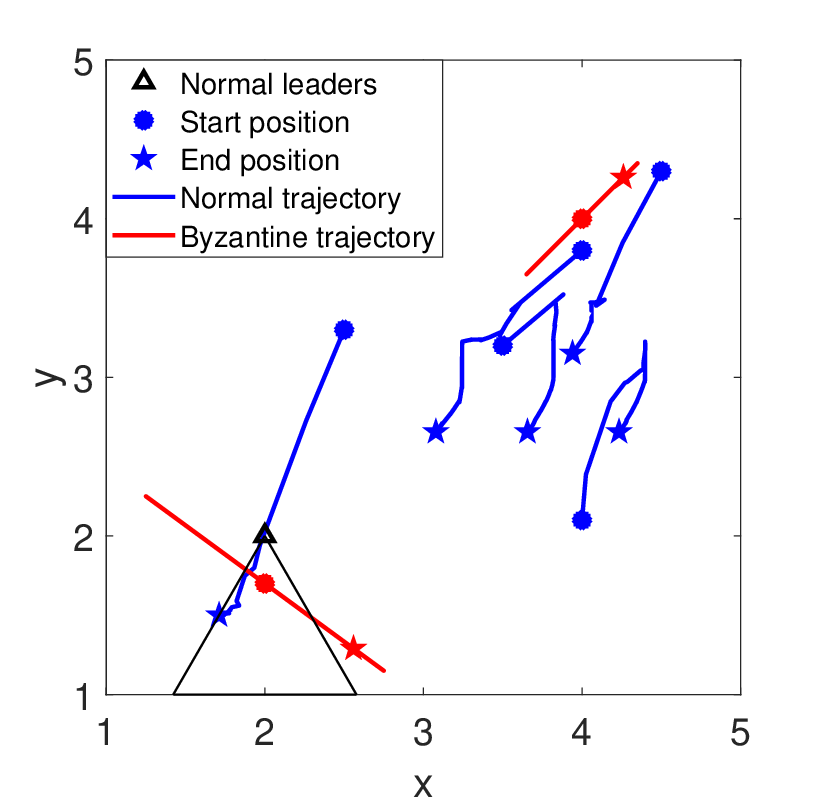}
	\vspace{-3pt}
	\caption{Nodes' trajectories in the time-varying leader-follower network in Fig.~\ref{9node2} applying the one-hop MDP-MSR algorithm.}
	\label{9nodeposition_1hop}
	\vspace*{-2.5mm}
\end{figure}

\appendices
\section*{Appendix}
\section*{A. Proof of Lemma \ref{tighter1}}

\begin{proof}
	If graph $\mathcal{G}[k]$ is strongly $(Q, t_0, 2f+1)$-robust w.r.t. the set $\mathcal{L}$ where $Q=1$,  take a set $\mathcal{F}$ satisfying the $f$-local model. Select any nonempty subset $\mathcal{W}_1\subset \mathcal{H}$, where $\mathcal{H}=\mathcal{W}\setminus\mathcal{F}$. Choose node $i\in \mathcal{Z}_{\mathcal{W}_1}^{2f+1}$. Then, after removing nodes in the set $\mathcal{F}$ from $\mathcal{V}$, at most $f$ follower nodes are removed. Thus,
	it must hold that $i\in \mathcal{Z}_{\mathcal{W}_1}^{f+1}$ in $\mathcal{G}_{\mathcal{H}}$. Hence, $\mathcal{G}_{\mathcal{H}}$ satisfies the condition in Definition~\ref{robust_following}.
	Since this is true for any set $\mathcal{F}$, then $\mathcal{G}[k]$ is a jointly $(f+1)$-robust following graph with $1$ hop where $k_{t+1}-k_t=1,  \forall [k_t, k_{t+1})$.
	
	Yet, the converse statement does not hold. This claim can be proved by simply noticing a counter example in Fig.~\ref{9node1}. There, the union of the three graphs $\mathcal{U}_{\mathcal{G}[k]}$ satisfies our condition in Definition~\ref{robust_following}, and hence, the union graph $\mathcal{U}_{\mathcal{G}[k]}$ is a jointly $(f+1)$-robust following graph with $1$ hop where the length of each interval $[k_t, k_{t+1})$ is $1$.
	Apparently, $\mathcal{U}_{\mathcal{G}[k]}$ is not strongly $(Q, t_0, 2f+1)$-robust w.r.t. the set $\mathcal{L}$ where $Q=1$. 
	We can observe from the fact that in follower set $\mathcal{S}=\{1,2,3\}$, none of nodes has 3 in-neighbors outside $\mathcal{S}$. 
\end{proof}

\begin{figure}[t]
	\centering
	%\vspace{-10pt}
	\includegraphics[width=3.1in]{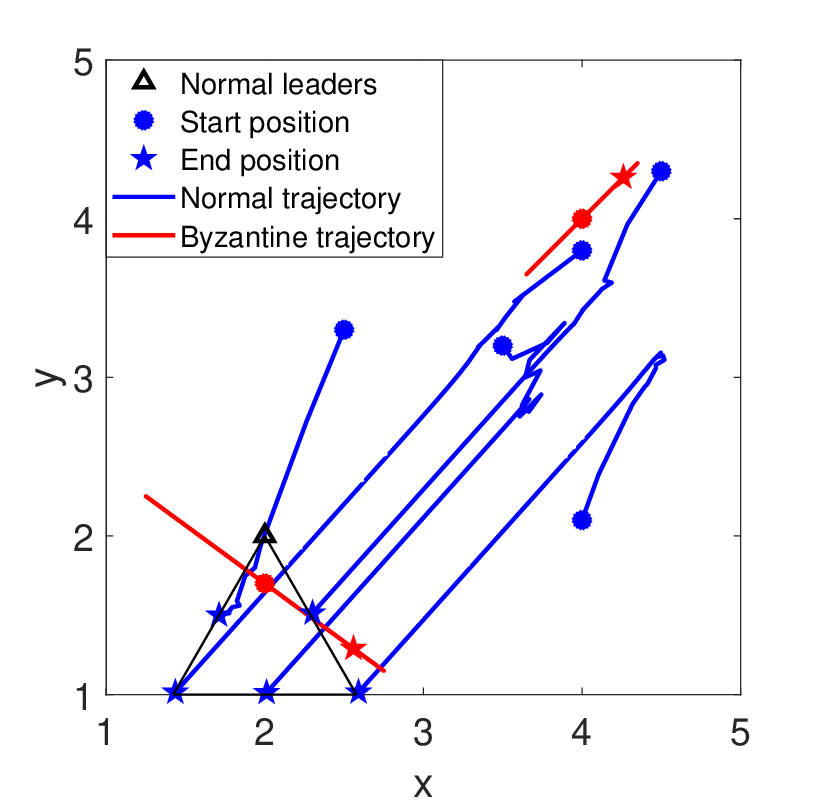}
	\vspace{-3pt}
	\caption{Nodes' trajectories in the time-varying leader-follower network in Fig.~\ref{9node2} applying the two-hop MDP-MSR algorithm.}
	\label{9nodeposition_2hop}
	\vspace*{-2.5mm}
\end{figure}

\section*{B. Proof of Lemma \ref{2f+1leaders}}

\begin{proof}
	1) Since $\mathcal{G}[k]$ is a jointly $(f+1)$-robust following graph with $l$ hops, in the normal network $\mathcal{G}_{\mathcal{N}}[k] = (\mathcal{N},\mathcal{E}_{\mathcal{N}}[k] )$, for node set $\mathcal{S}=\mathcal{W}^\mathcal{N} = \mathcal{N}\setminus \mathcal{L}$, it holds that $| \mathcal{Z}_{\mathcal{S}}^{f+1}[k_t, k_{t+1}) |  =| \{i\in \mathcal{S} :  \exists K_i \in [k_t, k_{t+1}) \medspace \textup{s.t.} \medspace  |\mathcal{I}_{i, \mathcal{S}}[K_i]| \geq f+1 \} | \geq 1$. Moreover, there could be at most $f$ adversarial leaders, and thus, there must be at least $2f+1$ leaders, i.e., $|\mathcal{L}| \geq 2f+1$.
	
	2) This claim is proved by contradiction. If $\exists [k_t, k_{t+1})$ s.t. $\forall k \in [k_t, k_{t+1})$, $\forall i \in \mathcal{W}^\mathcal{N}$,  $|\mathcal{N}_i^{l-}[k] \cap \mathcal{L}| \leq 2f$, and if there are $f$ adversarial leaders, then in the normal network $\mathcal{G}_{\mathcal{N}}[k] = (\mathcal{N},\mathcal{E}_{\mathcal{N}}[k] )$, for node set $\mathcal{W}^\mathcal{N} = \mathcal{N}\setminus \mathcal{L}$, it holds that $\forall k \in [k_t, k_{t+1})$,  $|\mathcal{I}_{i, \mathcal{W}^\mathcal{N}}[k]| \leq f$. Thus, $\mathcal{G}[k]$ cannot be a jointly $(f+1)$-robust following graph with $l$ hops.

	3) It is proved by contradiction. If $|\mathcal{W}_\mathcal{L}[k_t, k_{t+1})| \leq 2f$, and if $f$ nodes in $\mathcal{W}_\mathcal{L}[k_t, k_{t+1})$ are adversarial, then for node set $\mathcal{W}^\mathcal{N} \setminus \mathcal{W}_\mathcal{L}[k_t, k_{t+1})$, it holds that $\forall k \in [k_t, k_{t+1})$, $|\mathcal{I}_{i, \mathcal{W}^\mathcal{N} \setminus \mathcal{W}_\mathcal{L}[k_t, k_{t+1}) } [k]| \leq f$. This contradicts that $\mathcal{G}[k]$ is a jointly $(f+1)$-robust following graph with $l$ hops.

	4) We prove this claim by contradiction. If $\exists [k_t, k_{t+1})$ s.t. $\forall k \in [k_t, k_{t+1})$, $|\mathcal{N}_i^{1-}[k]| \leq 2f$, and if $f$ of such in-neighbors are adversarial, then node $i$ only has $f$ normal in-neighbors. Hence, the set $\{i\}$ does not satisfy the condition in Definition~\ref{robust_following}.
	Thus, $\mathcal{G}[k]$ cannot be a jointly $(f+1)$-robust following graph with $l$ hops. Furthermore, it is now easy to derive the second statement.
\end{proof}

\section*{References}
\vspace{-3mm}

\begin{IEEEbiography}[{\includegraphics[width=1in,height=1.25in,clip,keepaspectratio]{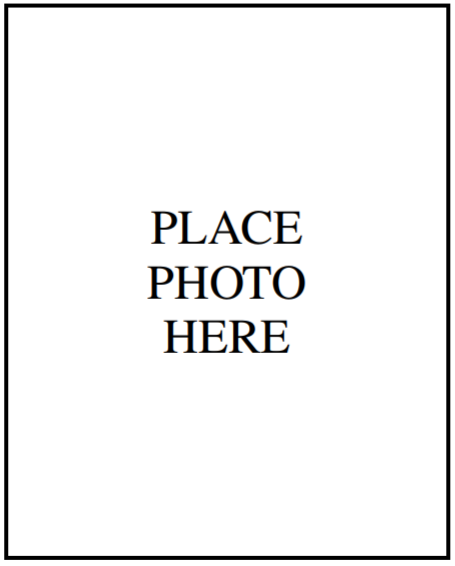}}]{Liwei Yuan} (Member) received the B.E. degree in Electrical Engineering
	and Automation from Tsinghua University,
	China, in 2017, and the Ph.D. degree in Computer
	Science from Tokyo Institute of Technology, Japan, in
	2022.
	He is currently a Postdoctoral Researcher in the College 
	of Electrical and Information Engineering at Hunan 
	University, Changsha, China. His current
	research focuses on security in multi-agent systems
	and distributed algorithms.
\end{IEEEbiography}

\begin{IEEEbiography}[{\includegraphics[width=1in,height=1.25in,clip,keepaspectratio]{blank}}]{Hideaki Ishii} (M'02-SM'12-F'21) received the
	M.Eng.\ degree in applied systems science from
	Kyoto University, Kyoto, Japan, in 1998, and the
	Ph.D. degree in electrical and computer engineering
	from the University of Toronto, Toronto,
	ON, Canada, in 2002. He was a Postdoctoral Research
	Associate at the University of Illinois at Urbana-Champaign,
	Urbana, IL, USA, from 2001 to
	2004, and a Research Associate at
	The University of Tokyo, Tokyo, Japan, from 2004 to 2007.
	He was an Associate Professor and Professor at the Department of Computer Science,
	Tokyo Institute of Technology, Yokohama, Japan in 2007--2024. Currently, he is a Professor at the 
	Department of Information Physics and Computing at The University of Tokyo, Tokyo, Japan.
	He was a Humboldt Research Fellow at the University of Stuttgart
	in 2014--2015. He has also held visiting positions at CNR-IEIIT at
	the Politecnico di Torino, the Technical University of Berlin, and
	the City University of Hong Kong. His research interests
	include networked control systems, multiagent systems, distributed algorithms,
	and cyber-security of control systems.
	
	Dr.~Ishii has served as an Associate Editor for Automatica, 
	the IEEE Control Systems Letters, the IEEE Transactions on Automatic Control, 
	the IEEE Transactions on Control of Network Systems,
	and the Mathematics of Control, Signals, and Systems.
	He was a Vice President for the IEEE Control Systems Society (CSS) in 2022--2023.
	He was the Chair of the IFAC Coordinating Committee on Systems and
	Signals in 2017--2023.
	He served as the IPC Chair for the IFAC World Congress 2023 held in Yokohama, Japan.
	He received the IEEE Control Systems Magazine Outstanding Paper
	Award in 2015. Dr.~Ishii is an IEEE Fellow.
\end{IEEEbiography}

\end{document}